\documentclass[11pt,letterpaper]{article}
\usepackage[letterpaper,margin=1in]{geometry}
\usepackage[T1]{fontenc}
\usepackage{lmodern}
\usepackage{authblk}
\usepackage{amsmath,amssymb,amsthm}
\usepackage{mathtools,bm}
\usepackage{graphicx,booktabs,array,tabularx,multirow}
\usepackage{xcolor}
\usepackage[numbers,sort&compress]{natbib}
\usepackage{tikz}
\usetikzlibrary{arrows.meta,positioning,calc,fit}
\usepackage[colorlinks=true,linkcolor=blue,citecolor=blue,urlcolor=blue]{hyperref}
\hypersetup{pdfcreator={pdfLaTeX}}
\allowdisplaybreaks[1]

\def\TheoremName{Theorem}\def\LemmaName{Lemma}
\def\PropositionName{Proposition}\def\CorollaryName{Corollary}
\def\DefinitionName{Definition}\def\RemarkName{Remark}
\def\ClaimName{Claim}\def\AlgorithmName{Algorithm}
\def\InputName{Input}\def\OutputName{Output}
\theoremstyle{plain}
\newtheorem{theorem}{\TheoremName}[section]
\newtheorem{lemma}[theorem]{\LemmaName}
\newtheorem{proposition}[theorem]{\PropositionName}

\theoremstyle{definition}

\theoremstyle{remark}

\newcounter{algorithm}
\newcounter{algline}
\newenvironment{algorithm}[1][]{%
	\begin{figure}[!htbp]\small
		\refstepcounter{algorithm}
		\hrule height .5pt\vspace{2pt}
		\noindent\textbf{\AlgorithmName~\thealgorithm}\enspace #1\par
		\vspace{2pt}\hrule height .3pt\vspace{3pt}
		\begin{list}{\arabic{algline}:}{%
				\usecounter{algline}\setlength{\leftmargin}{1.8em}%
				\setlength{\labelwidth}{1.3em}\setlength{\labelsep}{.5em}%
				\setlength{\rightmargin}{0pt}\setlength{\itemsep}{0pt}%
				\setlength{\parsep}{0pt}\setlength{\topsep}{0pt}%
				\setlength{\partopsep}{0pt}}%
		}{\end{list}\vspace{3pt}\hrule height .3pt\end{figure}}
\newcommand{\State}[1]{\item #1}
\newcommand{\StateIndent}[2]{\item\hspace*{#1\dimexpr1em\relax}#2}
\newcommand{\Input}[1]{\State{\textbf{\InputName:} #1}}
\newcommand{\Output}[1]{\State{\textbf{\OutputName:} #1}}

\newcommand{\supp}{\operatorname{supp}}
\newcommand{\OPT}{\operatorname{OPT}}
\newcommand{\E}{\mathbb{E}}
\newcommand{\R}{\mathbb{R}}
\newcommand{\one}{\mathbf{1}}
\newcommand{\zero}{\mathbf{0}}
\newcommand{\pos}[1]{[#1]_{+}}
\newcommand{\Top}{\operatorname{top2}}
\newcommand{\ceil}[1]{\left\lceil #1\right\rceil}
\newcommand{\floor}[1]{\left\lfloor #1\right\rfloor}
\newcommand{\comb}{\textsuperscript{\normalfont **}}
\newcommand{\cost}{\mathbf{c}}
\newcommand{\prices}{\mathbf{p}}
\newcommand{\hprices}{\widehat{\mathbf{p}}}
\newcommand{\res}{\mathbf{r}}
\newcommand{\ghost}{\mathbf{m}}
\newcommand{\lvec}{\boldsymbol{\lambda}}

\title{A Combinatorial $(1/2-\varepsilon)$-Approximation for $k$-Submodular Maximization under Multiple Knapsacks}
\author[1]{Siyuan Chen}
\author[1,2]{Suixiang Gao}
\author[1]{Zheyu Jiang}
\author[1]{Wenguo Yang}
\affil[1]{School of Mathematical Sciences,
	University of Chinese Academy of Sciences,
	Beijing 100049, China}
\affil[2]{Zhongguancun Laboratory,
	Beijing 100000, China}
\date{\vspace{-1.5em}
	\small\texttt{%
		chensiyuan221@mails.ucas.ac.cn, sxgao@ucas.ac.cn,\\
		jiangzheyu24@mails.ucas.ac.cn, yangwg@ucas.ac.cn}}
\hypersetup{pdftitle={A Combinatorial (1/2-epsilon)-Approximation for Non-Monotone k-Submodular Maximization under Multiple Knapsacks}}
\hypersetup{pdfauthor={Siyuan Chen; Suixiang Gao; Zheyu Jiang; Wenguo Yang}}
\begin{document}
	\hypersetup{pageanchor=false}
	\maketitle
	\begingroup
	\renewcommand{\thefootnote}{}
	\makeatletter
	\long\def\@makefntext#1{\noindent#1}
	\makeatother
	\begin{NoHyper}
		\footnotetext{The authors are listed in alphabetical order.}
	\end{NoHyper}
	\endgroup
	\thispagestyle{empty}
	\vspace{1.5em}
	\begin{abstract}
		We give a randomized combinatorial \((1/2-\varepsilon)\)-approximation for nonnegative non-monotone \(k\)-submodular maximization under a fixed number \(d\) of knapsack constraints, for \(k\ge2\). The algorithm selects whole elements and follows the proportional top-2 rule of Chen et al. \cite{chen2026}: it samples from the two labels with the largest marginal gains, with probabilities proportional to the positive parts of these gains. Extending their single-knapsack Bellman analysis, we construct a common resource potential and associated resource weights, with potential losses charged to greedy gains. Since these weights depend on the unknown optimum, we discretize their defining coefficients and enumerate monotone guesses over phases determined by resource consumption. This yields an algorithm using at most
		$\mathcal{A}(d,\varepsilon)(nk+1)^{H+2}$ value queries,
		where $n$ is the number of elements,
		$\mathcal{A}(d,\varepsilon)$ depends only on $d$ and $\varepsilon$,
		and $H=O(d\log(2d)/\varepsilon^2)$. The approximation ratio can be made arbitrarily close to the optimal oracle threshold for \(k=2\) and is asymptotically tight as \(k\to\infty\).
	\end{abstract}
	\noindent\textbf{Keywords:} $k$-submodular maximization; non-monotone; multiple knapsacks; combinatorial approximation.
	\clearpage
	\hypersetup{pageanchor=true}
	\pagenumbering{arabic}
	\section{Introduction}\label{sec:intro}
	A budgeted labeling problem asks which elements to select and which of $k$ labels to assign to them.
	A $k$-submodular objective combines diminishing returns within a consistent labeling with a compatibility condition between different labels.
	This framework extends submodularity and bisubmodularity and includes applications such as multi-type sensor placement and influence allocation~\cite{huberkolmogorov2012,ohsakayoshida2015}.
	When elements consume several kinds of resources, their selected support must satisfy several knapsack constraints at once.

	For $k\ge2$, we study nonnegative, possibly non-monotone $k$-submodular maximization under a fixed number of knapsack constraints.
	Chen et al.~\cite{chen2026} obtained a combinatorial $1/2$-approximation for a single knapsack using the proportional top-2 rule and a Bellman analysis of the remaining budget.
	We extend their approach to multiple knapsacks.
	Our algorithm uses resource weights to combine an element's resource consumptions into a single weighted cost for greedy selection, and our Bellman analysis accounts for all budgets simultaneously.
	
	\begin{theorem}[Main result, informal]\label{thm:informal}
		For every fixed number $d\ge1$ of knapsacks and every $0<\varepsilon<1/2$, there is a randomized combinatorial $(1/2-\varepsilon)$-approximation for nonnegative non-monotone $k$-submodular maximization, $k\ge2$.
		For fixed $d,\varepsilon$, the number of value queries is polynomial in the number of elements and labels.
		More precisely, it is at most
		\[
		\mathcal A(d,\varepsilon)(nk+1)^{\,O(d\log(2d)/\varepsilon^2)+2},
		\]
		where $n$ is the number of elements and $\mathcal A(d,\varepsilon)$ is independent of $n,k$.
	\end{theorem}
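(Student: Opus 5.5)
The plan has three parts: a partial enumeration of large elements, a randomized greedy driven by weighted costs, and a Bellman-type potential over all $d$ budgets, with the unknown weights handled by guessing.

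\textbf{Step 1: enumeration of large elements.} Normalize each budget to $1$, so each element $e$ has a cost vector $\cost(e)\in[0,1]^d$. We guess the $H$ elements of an optimal solution $\OPT$, together with their labels, that have the largest marginal contributions. There are at most $(nk+1)^{H}$ such guesses, and we fix them in the partial solution. With $H=O(d\log(2d)/\varepsilon^2)$, a standard averaging argument shows that every remaining element of $\OPT$ has marginal value at most $\varepsilon^2\OPT/(d\log(2d))$. It follows that discarding the single element that would overflow a budget, or losing an $O(\varepsilon/d)$ fraction of some residual budget, costs only $O(\varepsilon)\OPT$. The extra factor $(nk+1)^2$ comes from the greedy loop: $n$ rounds, each scanning all element--label pairs.

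\textbf{Step 2: greedy with resource weights and the Bellman potential.} Given a nonnegative weight vector $\lvec$, the greedy scores each candidate by the weighted cost $\langle\lvec,\cost(e)\rangle$. At each step it selects the element maximizing the top-2 gain per unit of weighted cost, among elements that still fit. It then samples a label from the two best labels, with probabilities proportional to the positive parts of their gains, following Chen et al.~\cite{chen2026}. The single-knapsack analysis supplies two local inequalities. First, the proportional top-2 rule ensures that the expected damage to the optimal extension is at most the expected gain. Second, the greedy gain per unit of weighted cost dominates the corresponding ratio for the remaining optimal elements.

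We then define a common potential $\Phi(\res)$ of the residual resource vector $\res$, for instance the value of an LP or fractional relaxation restricted to the remaining optimal elements under budgets $\res$. We would like to choose $\lvec$ as the dual optimal prices of this relaxation. With that choice, consuming $\cost(e)$ decreases $\Phi$ by at most roughly $\langle\lvec,\cost(e)\rangle$ times the density of $\OPT$, and this loss can be charged to the greedy gain. Summing gives the Bellman inequality
\[
\E[\text{gain}]\ \ge\ \E[\text{loss of }\OPT\text{ value}] + \E[\text{potential drop}],
\]
which telescopes to $2\,\E[f(S)]\ge(1-O(\varepsilon))\OPT$.

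\textbf{Step 3 (the main obstacle): dependence of the weights on $\OPT$.} The difficulty is that the dual prices $\lvec$ depend on the unknown optimum and change as resources are consumed. Our plan is to split the run into phases. A new phase begins whenever some coordinate of the consumed resources crosses a multiple of $\varepsilon/d$, so there are $O(d^2/\varepsilon)$ phases. In each phase we use a constant $\lvec$ taken from a grid of ratios, discretized to powers of $(1+\varepsilon)$ and truncated below at $\varepsilon/d$ relative to the largest entry. We can argue that the true prices evolve monotonically along the phases, which lets us enumerate only monotone guess sequences. The number of such sequences is a function $\mathcal A(d,\varepsilon)$ independent of $n$ and $k$.

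Within a phase, rounding $\lvec$ and holding it fixed perturbs the charging inequality by a $(1+O(\varepsilon))$ factor. The Bellman argument must be checked to tolerate this error, and this is where most of the work lies. The algorithm runs all guesses and outputs the best solution found. Multiplying the counts of Steps 1--3 gives the stated query bound $\mathcal A(d,\varepsilon)(nk+1)^{H+2}$ with approximation ratio $1/2-\varepsilon$.
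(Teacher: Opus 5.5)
Your outline has the right ingredients: a labeled optimum prefix, the proportional top-2 rule, a weighted-cost density, and monotone guesses in phases. But the argument that makes them work together is missing, and several of the steps you state would fail.

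\textbf{Step~2 (the potential).} This is the main gap. You never define the potential, and the inequality ``gain $\ge$ loss of $\OPT$ value $+$ potential drop'' cannot hold in general. The replacement bound $v_{\mathbf o(e)}-\mathbf q_e^{\mathsf T}\mathbf v\le G_e$ can be attained, so the whole gain may be consumed by the comparison loss, leaving nothing to pay for a positive drop of an LP value. Even granting the inequality, summing it does not give $2\E f(S)\ge\OPT$: under budgets, the final comparison labeling still contains unselected optimum elements, and nothing bounds that terminal gap.

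The paper instead uses the potential $\Psi=Kf_{\mathbf y}(\mathbf s)+\Lambda\pos{D}$. Here $\Lambda\le1/2$ is a log-sum-exp smoothed minimum, shifted by $t$, of the row coefficients $\lambda_j=r_j/(Qm_j+br_j)$, where $m_j$ decreases only when an optimum element is selected. The weights $p_j=w_j\phi(\lambda_j)/r_j$ satisfy two inequalities (Lemma~\ref{lem:resource}):
\begin{itemize}
\item $Q\sum_jp_jm_j\le1$, which turns density maximality into $D\le RG_e/a_e$;
\item $-\dot\Lambda\le(1-\beta_e\Lambda)\prices^{\mathsf T}\cost_e$, which gives $\Lambda-\Lambda'\le Ra_e(1-\beta_e\Lambda')$.
\end{itemize}
The weighted cost $a_e$ cancels in the product of these bounds, and the gap change is charged only with weight $\Lambda'$, so the gain suffices. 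The terminal gap vanishes because $\Lambda=0$ once some $r_j\le\rho$, and the ratio $\Lambda_0/K\approx1/2$ comes out of this construction. Nothing in your dual-price description supplies these two inequalities.

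\textbf{Step~1 (discarding optimum elements).} Enumerating the $H$ optimum elements of largest marginal gives $L\le f(\mathbf y)/H$, but it does not make costs small. So it is false that only a single optimum element is discarded. As residual budgets shrink, arbitrarily many optimum elements stop fitting. To control the weights during a selection one must even delete those with $c_{j,e}>\kappa r_j$. Charging $L$ per removal is then not $O(\varepsilon)\OPT$.

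The paper charges each deletion only $\Lambda_eL$. It proves $\sum_e\Lambda_e\le d/(\kappa Q)$ from $\Lambda_e\le\lambda_j<c_{j,e}/(\kappa QT_j)$ (Lemma~\ref{lem:deletion}). The prefix surplus $(K-\Lambda_0)f(\mathbf y)\ge(K-\Lambda_0)HL$ pays for this, and that is where $H=\ceil{d/(\kappa Q(K-\Lambda_0))}$ comes from.

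\textbf{Step~3 (phases and guessing).} Additive phases of width $\varepsilon/d$ cannot keep the frozen weights within a $1+O(\varepsilon)$ factor. The correct weights scale like $1/r_j$, and a single element that merely fits can overshoot a phase by nearly the whole remaining resource. The paper uses:
\begin{itemize}
\item multiplicative phases, ending when $r_j\le e^{-h}r_j^a$;
\item a safety fraction with $1-\kappa=e^{-h}$ to bound that overshoot;
\item guesses of the nonincreasing coefficients $\lambda_j$, rather than of the weights (which need not be monotone), with $\hprices$ rebuilt from the observed $\res^a$.
\end{itemize}
Finally, ``run all guesses'' is not enough if each guess sequence is rerun with fresh randomness. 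A guess that is correct for one run's realized history need not suit the run that actually uses it. The children must branch from a shared realized state at each phase boundary, as in the paper's continuation tree.
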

	
	The algorithm selects whole elements and follows Chen et al.'s proportional top-2 rule: among the two labels with largest marginal gains, it samples with probabilities proportional to their positive parts.
	It requires neither multilinear-extension evaluation nor fractional rounding.
	Its finite implementation enumerates small labeled initial solutions and guesses for the coefficients defining the resource weights.
	For each initial solution, the algorithm uses at most $\mathcal B(d,\varepsilon)(n^2k+1)$ value queries to explore all coefficient guesses and construct the resulting candidate solutions.
	Theorem~\ref{thm:main} and Proposition~\ref{prop:queries} give the precise guarantee and query count.
	The approximation approaches the optimal oracle threshold for $k=2$ and is asymptotically tight as $k\to\infty$, up to the arbitrarily small loss $\varepsilon$.
	
	\begin{table}[tbp]
		\centering\small
		\caption{Selected approximation results for $k$-submodular maximization under knapsack constraints, $k\ge2$.}
		\label{tab:knapsacks}
		\begin{tabular}{@{}p{.16\textwidth}p{.39\textwidth}p{.39\textwidth}@{}}
			\toprule
			Constraint & Monotone & Non-monotone\\
			\midrule
			Single knapsack &
			\begin{minipage}[t]{\linewidth}\vspace{0pt}
				$\tfrac12(1-e^{-1})$\comb\ \cite{tang2022}\\[2pt]
				$\tfrac12(1-e^{-2})$\comb\ \cite{xiao2026}\\[2pt]
				$1/2-\varepsilon$ \cite{zhou2025}\\[2pt]
				$1/2$\comb\ \cite{wang2025}
			\end{minipage}
			&\begin{minipage}[t]{\linewidth}\vspace{0pt}
				$\tfrac13(1-e^{-3})$\comb; $1/3-\varepsilon$ \cite{xiao2026}\\[2pt]
				$1/3-\varepsilon$ \cite{zhou2025}; $1/3$\comb\ \cite{wang2025}\\[2pt]
				$a_k-\varepsilon^\bigstar$ \cite{aggarwal2026}\\[2pt]
				$1/2$\comb\ \cite{chen2026}
			\end{minipage}\\[8pt]
			$d$ knapsacks &
			\begin{minipage}[t]{\linewidth}\vspace{0pt}
				$\tfrac1{2+2d}-\varepsilon$\comb\ \cite{gong2026}\\[2pt]
				$1/2-\varepsilon$ \cite{zhou2025}\\[2pt]
				\textbf{$1/2-\varepsilon$}\comb\ (\textbf{this work})
			\end{minipage}
			&\begin{minipage}[t]{\linewidth}\vspace{0pt}
				$\tfrac1{3+2d}-\varepsilon$\comb\ \cite{gong2026}\\[2pt]
				$1/3-\varepsilon$ \cite{zhou2025}\\[2pt]
				$a_k-\varepsilon^\bigstar$ \cite{aggarwal2026}\\[2pt]
				\textbf{$1/2-\varepsilon$}\comb\ (\textbf{this work})
			\end{minipage}\\
			\bottomrule
		\end{tabular}
		\par\vspace{4pt}
		\begin{minipage}{.97\textwidth}\footnotesize
			\comb\ denotes a combinatorial algorithm.\\
			$\bigstar$ $a_2=0.4456016834\ldots$ and $a_k=0.4541264676\ldots$ for $k\ge3$.
		\end{minipage}
	\end{table}
	
	\subsection{Technical overview}
	Our starting point is the single-budget Bellman analysis of Chen et al.~\cite{chen2026}.
	For the analysis, fix an optimal labeling.
	At each step, the comparison labeling agrees with the algorithm on the selected elements and retains the original labels of the remaining available optimum elements.
	Its value minus the current value is the \emph{comparison gap}.
	The proportional top-2 rule controls the expected change in this gap.
	The Bellman potential combines value already obtained with the positive part of the comparison gap, scaled by a resource-dependent coefficient.
	
	With several budgets, the coefficient and the selection rule must be designed together.
	We combine the single-budget coefficients through a smooth lower approximation to their minimum, obtaining one coefficient that respects every budget.
	The construction supplies resource weights, which turn an element's cost vector into one \emph{weighted cost}.
	Greedy selection maximizes expected gain divided by this cost.
	The weights satisfy two linked bounds: one controls the remaining comparison gap by gain density, and the other controls the coefficient's decrease by the selected element's weighted cost.
	That cost cancels when the bounds are combined, allowing the expected gain to cover the potential loss of a whole-element selection.
	
	The remaining obstacle is that an optimum element can become too large relative to an available budget.
	Removing it from both the candidates and the comparison labeling preserves the density argument but may lower the potential.
	We bound the total resulting potential loss on every realized run in terms of the largest remaining optimum marginal at the starting solution, independently of how many elements are deleted.
	A sufficiently long prefix, chosen greedily within the optimum, makes this marginal small relative to its own value.
	Retaining that value creates a surplus in the initial potential that pays the entire deletion charge.
	Enumerating small labeled prefixes ensures that such a starting solution is included.
	
	Neither the good prefix nor the ideal weights are known to the algorithm.
	We enumerate small labeled prefixes and finite guesses of the coefficients defining the weights.
	Each guess is used until a remaining budget drops by a prescribed fraction; a fixed stopping threshold bounds the number of these phases.
	At each phase boundary, all guesses inherit the same realized state, including previously sampled labels, and use fresh randomness thereafter.
	This lets the proof follow accurate guesses using only past information.
	The best candidate therefore inherits the comparison guarantee, with an arbitrarily small approximation loss.
	
	\subsection{Related work}
	\paragraph{Foundations and $k$-submodular maximization.}
	Huber and Kolmogorov~\cite{huberkolmogorov2012} introduced the modern $k$-submodular framework.
	Ward and \v{Z}ivn\'y~\cite{wardzivny2016} established its marginal characterization and foundational approximation results.
	For unconstrained maximization, Iwata, Tanigawa, and Yoshida~\cite{iwata2016} gave a randomized $1/2$-approximation for non-monotone objectives and a $k/(2k-1)$-approximation for monotone objectives, together with the monotone oracle lower bound $(k+1)/(2k)$.
	Oshima~\cite{oshima2021} improved the unconstrained non-monotone ratio for fixed $k\ge3$.
	Under support constraints, Ohsaka and Yoshida~\cite{ohsakayoshida2015} and Sakaue~\cite{sakaue2017} established monotone $1/2$ guarantees for size and matroid constraints, respectively; Sun, Liu, and Li~\cite{sun2022} obtained $1/3$ for the non-monotone matroid problem.
	
	\paragraph{Single and multiple knapsacks.}
	Table~\ref{tab:knapsacks} lists selected guarantees.
	For a single knapsack, Tang, Wang, and Chan~\cite{tang2022} gave $\tfrac12(1-e^{-1})$ for monotone objectives.
	Xiao et al.~\cite{xiao2026} improved the monotone guarantee to $\tfrac12(1-e^{-2})$ and obtained combinatorial $\tfrac13(1-e^{-3})$ and continuous $1/3-\varepsilon$ guarantees for non-monotone objectives.
	Ha, Pham, and Tran~\cite{ha2024} developed faster deterministic monotone algorithms, including a $1/3-\varepsilon$ guarantee.
	Wang~\cite{wang2025} showed that one initial guess suffices for $1/2$ in the monotone case and $1/3$ in the non-monotone case.
	Chen et al.~\cite{chen2026} introduced the proportional top-2 analysis, obtaining $\sqrt2-1$ for non-monotone matroid and knapsack constraints; their revised single-budget Bellman analysis gives $1/2$ with $O(n^3k^2)$ queries.
	Our algorithm retains their labeling rule and extends that Bellman argument to several budgets.
	
	For multiple knapsacks, Gong, Liu, and Fang~\cite{gong2026} gave combinatorial streaming ratios $1/(2+2d)-\varepsilon$ and $1/(3+2d)-\varepsilon$ for monotone and non-monotone objectives.
	Spaeh, Ene, and Nguyen~\cite{spaeh2025} study single-pass streaming and online algorithms under cardinality and knapsack constraints.
	Zhou, Huang, and Wang~\cite{zhou2025} developed a $k$-multilinear extension and rounding framework, obtaining $1/2-\varepsilon$ and $1/3-\varepsilon$, respectively, for a fixed number of knapsacks.
	Aggarwal~\cite{aggarwal2026} improved the non-monotone continuous guarantees to the constants in Table~\ref{tab:knapsacks}, with arbitrarily small loss from feasible rounding under fixed multiple budgets.
	For broader support families, Yu et al.~\cite{yu2023} studied matroid--knapsack intersections, while Zhang et al.~\cite{zhang2026} and Ye et al.~\cite{ye2026} gave constraint-dependent guarantees for $p$-systems with multiple knapsacks and related approximate models.
	
	\paragraph{Relation to earlier analyses.}
	Our proof uses the gain--loss comparison familiar from randomized double greedy~\cite{buchbinder2015}.
	Bellman's dynamic-programming principle~\cite{bellman1952} supplies the general continuation viewpoint; the specific resource coefficient and top-2 inequalities used here come from Chen et al.~\cite{chen2026}.
	
	\paragraph{Ordinary submodular maximization.}
	For ordinary monotone submodular maximization, Sviridenko~\cite{sviridenko2004} obtained the optimal single-knapsack factor $1-1/e$ through enumeration and greedy selection.
	For non-monotone objectives, Lee et al.~\cite{lee2010} gave constant approximations under matroid or a fixed number of knapsack constraints.
	Measured continuous greedy~\cite{feldman2011}, combined with knapsack rounding~\cite{kulik2013}, yields $1/e-\varepsilon$ under a fixed number of budgets.
	The nonsymmetric method of Buchbinder and Feldman~\cite{buchbinderfeldman2019} and their later DR-submodular analysis~\cite{buchbinderfeldman2024} improved the general guarantees to $0.385-\varepsilon$ and $0.401-\varepsilon$.
	The latter is the strongest general randomized guarantee identified in our review.
	Feldman and Kuhnle's October 2026 preprint~\cite{feldmankuhnle2026} gives the recent oracle upper bound $8/17$, even for cardinality constraints and hence also for knapsacks.
	
	A separate line improves deterministic approximation and query efficiency.
	Sun et al.~\cite{sun2024} gave a deterministic $1/4$ guarantee for a single knapsack.
	The extended multilinear method of Buchbinder and Feldman~\cite{buchbinderfeldman2025} underlies subsequent deterministic developments; Chen et al.~\cite{chensubmodular2026} obtain $0.367-\varepsilon$ for a single knapsack.
	Pham~\cite{pham2025} developed linear-query deterministic algorithms, and Liu and Zhang's September 2026 preprint~\cite{liuzhang2026} achieves $1/4-\varepsilon$ with $O(n\log^2(1/\varepsilon)/\varepsilon^2)$ queries.
	The unconstrained $1/2$ oracle bound of Feige, Mirrokni, and Vondr\'ak~\cite{feige2011} also yields the bisubmodular lower bound~\cite{wardzivny2016}.
	
	\paragraph{Organization.}
	Section~\ref{sec:prelim} defines the labeling operations, the top-2 rule, and the comparison loss.
	Section~\ref{sec:core} develops the resource weights and potential, then bounds deletions and explains their compensation by a prefix.
	Sections~\ref{sec:algorithm} and~\ref{sec:implementation} give the finite implementation and its approximation and query bounds.
	The appendices contain the local algebra and parameter estimates.
	
	
	\section{Preliminaries}\label{sec:prelim}
	Let $V$ be a ground set of $n$ elements and let $[k]=\{1,\ldots,k\}$, where $k\ge2$.
	We use the notation of Chen et al.~\cite{chen2026}.
	The domain $(k+1)^V$ consists of tuples $\mathbf{x}=(X_1,\ldots,X_k)$ of pairwise disjoint subsets of $V$.
	Equivalently, $\mathbf{x}(e)\in\{0\}\cup[k]$ is the label of $e$, with label zero meaning unselected.
	Write $\supp(\mathbf{x})=\bigcup_i X_i$ and $\mathbf{x}\preceq\mathbf{y}$ if $X_i\subseteq Y_i$ for every $i$.
	The empty labeling is $\zero$.
	For $\mathbf{x},\mathbf{y}\in(k+1)^V$, define
	\[
	\mathbf{x}\sqcap\mathbf{y}=(X_1\cap Y_1,\ldots,X_k\cap Y_k),\qquad
	\mathbf{x}\sqcup\mathbf{y}=(Z_1,\ldots,Z_k),
	\quad Z_i=(X_i\cup Y_i)\setminus\bigcup_{\ell\ne i}(X_\ell\cup Y_\ell).
	\]
	A function $f:(k+1)^V\to\R$ is $k$-submodular if, for all $\mathbf{x},\mathbf{y}\in(k+1)^V$,
	\[
	f(\mathbf{x})+f(\mathbf{y})\ge f(\mathbf{x}\sqcap\mathbf{y})+f(\mathbf{x}\sqcup\mathbf{y}).
	\]
	Throughout the paper, $f$ is nonnegative and is accessed through an exact value oracle.
	For an unselected element $e$, let $\mathbf{x}+i\one_e$ denote its assignment to label $i$, and put
	\[
	\Delta_{e,i}f(\mathbf{x})=f(\mathbf{x}+i\one_e)-f(\mathbf{x}).
	\]
	The marginal characterization of Ward and \v{Z}ivn\'y~\cite{wardzivny2016} states that $k$-submodularity is equivalent to orthant submodularity and pairwise monotonicity:
	\begin{align}
		\Delta_{e,i}f(\mathbf{x})&\ge\Delta_{e,i}f(\mathbf{y})
		&&(\mathbf{x}\preceq\mathbf{y},\ e\notin\supp(\mathbf{y})),\label{eq:orthant}\\
		\Delta_{e,i}f(\mathbf{x})+\Delta_{e,\ell}f(\mathbf{x})&\ge0
		&&(e\notin\supp(\mathbf{x}),\ i\ne\ell).\label{eq:pairwise}
	\end{align}
	
	There are $d$ knapsack constraints with budgets $B_j\ge0$. Element $e$ has label-independent costs $c_{j,e}\ge0$ and a feasible labeling satisfies
	\[
	\sum_{e\in\supp(\mathbf{x})}c_{j,e}\le B_j\qquad(j\in[d]).
	\]
	Write $\cost_e=(c_{1,e},\ldots,c_{d,e})^{\mathsf T}$ for its cost vector.
	Zero-budget rows first exclude elements of positive cost in that row; the row is then replaced by a redundant unit-budget zero-cost row.
	All positive budgets are normalized to one, and individually infeasible elements are removed.
	These operations preserve the feasible domain and leave exactly $d$ rows.
	Let $\mathbf{o}$ be an optimal feasible labeling and let $\OPT=f(\mathbf{o})$.
	Vector inequalities are componentwise; $\pos{a}=\max\{a,0\}$.
	Fixed deterministic tie-breaking is used throughout.
	Table~\ref{tab:notation} collects the main notation, including the comparison quantities introduced in Section~\ref{sec:core}.
	\begin{table}[!htb]
		\centering
		\small
		\caption{Main notation. Continuation quantities refer to the problem remaining after the fixed prefix $\mathbf y$; its remaining budgets are normalized to one.}
		\label{tab:notation}
		\renewcommand{\arraystretch}{1.13}
		\begin{tabularx}{\textwidth}{@{}p{.22\textwidth}X@{}}
			\toprule
			Symbol & Meaning\\
			\midrule
			\multicolumn{2}{@{}l}{\emph{Input and quantities maintained by the algorithm}}\\[2pt]
			$V,n,k,d$ & Ground set, number of elements, number of labels, and number of knapsack constraints.\\
			$\mathbf y,\ f_{\mathbf y}$ & Fixed labeled prefix and conditioned objective $f_{\mathbf y}(\mathbf x)=f(\mathbf y\sqcup\mathbf x)$, retaining the baseline value $f(\mathbf y)$.\\
			$\mathbf s,\ U$ & Current continuation labeling and its unselected candidates that have not been permanently removed.\\
			$c_{j,e},\ \cost_e,\ r_j$ & Cost of $e$ in row $j$, its cost vector, and the resource still available: $r_j=1-\sum_{e\in\supp(\mathbf s)}c_{j,e}$.\\
			$\mathbf g_e,\mathbf q_e,G_e$ & Marginal-gain vector at $\mathbf s$, proportional top-2 label distribution, and expected gain $G_e=\mathbf q_e^{\mathsf T}\mathbf g_e$.\\
			$\widehat\lambda_j,\widehat p_j$ & Guessed coefficient and resulting resource weight, fixed within a phase; the greedy density is $G_e/\sum_j\widehat p_jc_{j,e}$.\\
			$H,\kappa,h$ & Maximum enumerated prefix size, single-element safety fraction $c_{j,e}\le\kappa r_j$, and phase width.\\
			\midrule
			\multicolumn{2}{@{}l}{\emph{Quantities for the comparison analysis}}\\[2pt]
			$\mathbf o,\ \OPT$ & Fixed optimal labeling and its value $f(\mathbf o)$.\\
			$\mathbf z$ & Comparison labeling: the algorithm's current labels, together with the original labels of unselected, undeleted residual optimum elements.\\
			$\theta_e,\ m_j$ & Indicator that $e$ belongs to the residual optimum, and comparison resource with slack $t>0$: $m_j=1+t-\sum_{e\in\supp(\mathbf s)}\theta_ec_{j,e}$. Only selected optimum elements reduce $m_j$.\\
			$\lambda_j,\ \Lambda$ & Row coefficient $\lambda_j=r_j/(Qm_j+br_j)$ and common comparison coefficient, approximating the smallest $\lambda_j$.\\
			$p_j,\ R,\ K$ & Ideal resource weight, multiplicative error bound along the comparison path, and the compensation factor $K=R^2$.\\
			$D,\ \Psi$ & Comparison gap $D=f_{\mathbf y}(\mathbf z)-f_{\mathbf y}(\mathbf s)$ and potential $\Psi=Kf_{\mathbf y}(\mathbf s)+\Lambda\pos D$.\\
			$L$ & Largest positive marginal of a residual optimum element, in its original label, at the prefix $\mathbf y$.\\
			\bottomrule
		\end{tabularx}
	\end{table}
	

	\subsection{The proportional top-2 rule}
	Given a marginal vector $\mathbf{g}=(g_1,\ldots,g_k)$ satisfying pairwise monotonicity, let $\alpha,\beta$ index its two largest entries, with $g_\alpha\ge g_\beta$.
	Put $y_1=\pos{g_\alpha}$ and $y_2=\pos{g_\beta}$.
	If $y_1=0$, let $\mathbf{q}$ be concentrated on $\alpha$ and set $G=0$.
	Otherwise define
	\[
	q_\alpha=\frac{y_1}{y_1+y_2},\qquad
	q_\beta=\frac{y_2}{y_1+y_2},\qquad
	q_i=0\ (i\notin\{\alpha,\beta\}),\qquad
	G=\mathbf{q}^{\mathsf T}\mathbf{g}=\frac{y_1^2+y_2^2}{y_1+y_2}.
	\]
	We write $(\mathbf{q},G)=\Top(\mathbf{g})$.
	Every sampled label has nonnegative marginal, so $G$ is the exact expected gain.
	The constants used below are the constants from Chen et al.~\cite{chen2026}:
	\[
	A=\frac{\sqrt2-1}{2},\qquad Q=1+A=\frac{1+\sqrt2}{2},\qquad b=2-Q.
	\]
	The first two bounds below control changes in the comparison labeling defined next; the third relates an individual marginal to the expected gain $G$.
	
	\begin{lemma}[top-2 certificate]\label{lem:top2}
		Let $(\mathbf{q},G)=\Top(\mathbf{g})$. For every pairwise-monotone vector $\mathbf{v}\le\mathbf{g}$ and every label $i \in [k]$,
		\[
		v_i-\mathbf{q}^{\mathsf T}\mathbf{v}\le G,\qquad
		-\mathbf{q}^{\mathsf T}\mathbf{v}\le AG,\qquad
		\max_i g_i\le QG.
		\]
	\end{lemma}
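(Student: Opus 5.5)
The plan is to reduce all three inequalities to the two active coordinates $\alpha,\beta$ and then to one-variable inequalities in the ratio $t=y_2/y_1\in[0,1]$. The constraints used throughout are $v\le g$, pairwise monotonicity of $\mathbf v$ (that is, $v_a+v_b\ge0$ for $a\ne b$), and the fact that $\mathbf q$ is supported on $\{\alpha,\beta\}$ with $q_\alpha\ge q_\beta$.

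\emph{Degenerate case $y_1=0$.} Then every $g_i\le0$. Pairwise monotonicity of $\mathbf g$ gives $g_\alpha+g_\beta\ge0$, so $g_\alpha=g_\beta=0$. For any other $i$, $g_i\ge -g_\alpha=0$ and $g_i\le g_\beta=0$, hence $\mathbf g=\zero$. For $\mathbf v$, each $v_i\le0$, and $v_i+v_\alpha\ge0$ forces every entry to vanish (here $k\ge2$ is used). So $\mathbf v=\zero$ and $G=0$, and all three bounds are trivial. From now on assume $y_1>0$, so $G=(y_1^2+y_2^2)/(y_1+y_2)=y_1(1+t^2)/(1+t)$.

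\emph{Third bound.} Since $\max_i g_i=y_1$, it amounts to $(1+t)/(1+t^2)\le Q$ on $[0,1]$. Elementary calculus shows the maximum is at $t=\sqrt2-1$. There $1+t=\sqrt2$ and $1+t^2=4-2\sqrt2$, so the value is $1/(2\sqrt2-2)=(1+\sqrt2)/2=Q$.

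\emph{Second bound.}
\begin{itemize}
\item If $y_2=0$, then $q_\beta=0$ and $v_\beta\le g_\beta\le0$. Hence $v_\alpha\ge -v_\beta\ge0$ and $-\mathbf q^{\mathsf T}\mathbf v=-v_\alpha\le0$.
\item If $y_2>0$, minimize the linear form $q_\alpha v_\alpha+q_\beta v_\beta$ over the polygon $v_\alpha\le y_1$, $v_\beta\le y_2$, $v_\alpha+v_\beta\ge0$. The only relevant vertices are $(-y_2,y_2)$ and $(y_1,-y_1)$, and the form is bounded below on the polygon because $q_\alpha\ge q_\beta$. This gives $-\mathbf q^{\mathsf T}\mathbf v\le y_2(y_1-y_2)/(y_1+y_2)$.
\item It then remains to show $t(1-t)\le A(1+t^2)$. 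The maximizer is again $t=\sqrt2-1$, where the ratio $t(1-t)/(1+t^2)$ equals $(3\sqrt2-4)/(2\sqrt2(\sqrt2-1))=(\sqrt2-1)/2=A$.
\end{itemize}

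\emph{First bound.} The key identity is
\[
\mathbf q^{\mathsf T}\mathbf v=q_\beta(v_\alpha+v_\beta)+(q_\alpha-q_\beta)v_\alpha ,
\]
combined with pairwise monotonicity. Consider the three positions of $i$:
\begin{itemize}
\item For $i\notin\{\alpha,\beta\}$, we have $v_\alpha+v_\beta\ge0$ and $v_\alpha\ge -v_i$. Since $q_\alpha-q_\beta\ge0$, this gives $\mathbf q^{\mathsf T}\mathbf v\ge-(q_\alpha-q_\beta)v_i$. Hence $v_i-\mathbf q^{\mathsf T}\mathbf v\le v_i(1+q_\alpha-q_\beta)=v_i\cdot 2y_1/(y_1+y_2)$. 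This is $\le0$ if $v_i<0$. Otherwise $v_i\le g_i\le g_\beta$ gives $v_i\le y_2$.
\item For $i=\alpha$, we have $v_\alpha-\mathbf q^{\mathsf T}\mathbf v=q_\beta(v_\alpha-v_\beta)\le 2q_\beta\pos{v_\alpha}\le 2y_1y_2/(y_1+y_2)$, using $-v_\beta\le v_\alpha$.
\item For $i=\beta$, symmetrically $q_\alpha(v_\beta-v_\alpha)\le 2q_\alpha\pos{v_\beta}\le 2y_1y_2/(y_1+y_2)$.
\end{itemize}
In every case the bound is at most $2y_1y_2/(y_1+y_2)$, which is $\le (y_1^2+y_2^2)/(y_1+y_2)=G$ by AM--GM.

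The main obstacle is this first bound for $i\notin\{\alpha,\beta\}$. Using only $v_\alpha,v_\beta\ge -v_i$ yields $2y_2$, which exceeds $G$ when $y_1=y_2$. The identity above, which exploits $v_\alpha+v_\beta\ge0$ as well, is what removes that loss.
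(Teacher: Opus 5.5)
Your proof is correct and follows essentially the same route as the paper's proof in Appendix~\ref{app:top2}. The shared steps are the reduction to the two top coordinates, the decomposition $\mathbf q^{\mathsf T}\mathbf v=q_\beta(v_\alpha+v_\beta)+(q_\alpha-q_\beta)v_\alpha$ for labels outside $\{\alpha,\beta\}$, the bounds $2q_\beta y_1$ and $2q_\alpha y_2$ closed by AM--GM, and the reduction of the addition bound to $y_2(y_1-y_2)/(y_1+y_2)$.

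The differences are cosmetic. The paper obtains the addition bound from the same decomposition rather than from LP vertices. Your LP region is in fact a compact triangle, so the remark that the form is bounded below because $q_\alpha\ge q_\beta$ is unnecessary. The paper also closes the two scalar inequalities with the perfect-square identities $A(a^2+c^2)-c(a-c)=(\sqrt A\,a-\sqrt{A+1}\,c)^2$ and $Q(a^2+c^2)-a(a+c)=(\sqrt A\,a-\sqrt Q\,c)^2$, using $A(A+1)=1/4$, instead of maximizing over $t$ by calculus.
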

	Appendix~\ref{app:top2} gives a direct proof.
	
	\section{A common Bellman potential}\label{sec:core}
	We first analyze a continuation from a fixed part $\mathbf{y}\preceq\mathbf{o}$ of the optimum.
	The continuation uses approximately correct resource weights; Section~\ref{sec:algorithm} shows how to include such a continuation without knowing the optimum.
	Fix the elements and labels of $\mathbf{y}$, and define
	\[
	f_{\mathbf{y}}(\mathbf{x})=f(\mathbf{y}\sqcup\mathbf{x})
	\qquad\bigl(\supp(\mathbf{x})\subseteq V\setminus\supp(\mathbf{y})\bigr).
	\]
	Thus $f_{\mathbf{y}}(\zero)=f(\mathbf{y})$.
	On this residual domain, $f_{\mathbf{y}}$ remains nonnegative and inherits orthant submodularity and pairwise monotonicity from $f$.
	After normalizing the residual budgets, again write $c_{j,e}$ for the remaining costs; every row of the residual optimum has cost at most one.
	If a residual budget is zero, perform the same zero-budget preprocessing as in Section~\ref{sec:prelim}.
	Because $\mathbf{y}\preceq\mathbf{o}$, no residual optimum element is excluded by this preprocessing.
	
	A continuation selects whole elements and samples their labels using the proportional top-2 rule.
	Let $\mathbf{s}$ be its current labeling and $r_j=1-\sum_{e\in\supp(\mathbf{s})}c_{j,e}$ its physical remaining resource.
	Write $\res=(r_1,\ldots,r_d)^{\mathsf T}$.
	Before selecting a positive-cost element, it permanently deletes each unselected element for which $c_{j,e}>\kappa r_j$ in some row, where $0<\kappa<1$.
	Call these elements unsafe: although they may still fit, their cost is too large relative to the remaining resource for the analysis below.
	Deletion removes an element from future consideration, not from the actual labeling.
	The surviving unselected elements form the candidate set $U$.
	For each candidate, compute
	\[
	\mathbf{g}_e=(\Delta_{e,i}f_{\mathbf{y}}(\mathbf{s}))_{i\in[k]},\qquad
	(\mathbf{q}_e,G_e)=\Top(\mathbf{g}_e).
	\]
	Given positive resource weights $\hprices$, held fixed for a phase, the next positive-cost element maximizes
	\begin{equation}\label{eq:density}
		\frac{G_e}{\hprices^{\mathsf T}\cost_e}.
	\end{equation}
	Zero-cost elements are selected first in a fixed order.
	The denominator $\hprices^{\mathsf T}\cost_e$ is the weighted cost of $e$: $c_{j,e}$ is its fixed resource consumption, whereas $\widehat p_j$ is the weight assigned to that resource in the selection rule.
	The safety test implies $r'_j\ge(1-\kappa)r_j>0$ after every positive-cost selection.
	
	\subsection{The comparison labeling and its loss}
	Following the hybrid-solution analysis of Chen et al.~\cite{chen2026}, maintain a comparison labeling $\mathbf{z}$.
	It contains the actual labels on $\supp(\mathbf{s})$ and the original optimum labels on all unselected, undeleted residual optimum elements; all other elements have label zero.
	Thus $\mathbf{s}\preceq\mathbf{z}$, although $\mathbf{z}$ need not be feasible.
	Initially $\mathbf{s}=\zero$ and $\mathbf{z}$ is the residual optimum.
	As the continuation proceeds, $\mathbf{z}$ changes in three ways:
	\begin{itemize}
		\item Selecting an optimum element replaces its optimum label in $\mathbf{z}$ by its sampled label. This is a \emph{replacement}, including the case when the labels agree.
		\item Selecting an element outside the optimum adds its sampled label to $\mathbf{z}$. This is a \emph{pure addition}.
		\item Deleting an unselected optimum element resets its label in $\mathbf{z}$ to zero. Deleting any other element leaves $\mathbf{z}$ unchanged.
	\end{itemize}
	Write $\mathbf{z}-e$ for resetting the label of $e$ to zero.
	For a selection, let $\mathcal F$ denote the history before sampling its label and let primes denote the resulting state.
	The \emph{comparison loss}, called hybrid loss in~\cite{chen2026}, is the signed quantity
	\[
	f_{\mathbf{y}}(\mathbf{z})-\E[f_{\mathbf{y}}(\mathbf{z}')\mid\mathcal F].
	\]
	For a replacement of $e$, set $v_i=\Delta_{e,i}f_{\mathbf{y}}(\mathbf{z}-e)$; its comparison loss is $v_{\mathbf{o}(e)}-\mathbf{q}_e^{\mathsf T}\mathbf{v}\le G_e$.
	For a pure addition, set $v_i=\Delta_{e,i}f_{\mathbf{y}}(\mathbf{z})$; its comparison loss is $-\mathbf{q}_e^{\mathsf T}\mathbf{v}\le AG_e$.
	In both cases $\mathbf{v}\le\mathbf{g}_e$ by~\eqref{eq:orthant}, and $\mathbf{v}$ is pairwise monotone by~\eqref{eq:pairwise}, so Lemma~\ref{lem:top2} applies.
	The comparison loss may be negative because $\mathbf{z}$ can gain value.
	Deletion has no label sampling and will be bounded separately.
	
	\subsection{From one resource to a common coefficient}
	The actual value and the comparison value differ by
	\[
	D=f_{\mathbf{y}}(\mathbf{z})-f_{\mathbf{y}}(\mathbf{s}).
	\]
	To control this gap under several budgets, we need one coefficient that respects every resource row.
	Fix $0<t<1/2$ and $\tau>0$; their sizes will be chosen from $\varepsilon$.
	Let $\theta_e=1$ if $e$ belongs to the residual optimum and $\theta_e=0$ otherwise.
	These indicators are fixed for the entire analysis, including after an element is deleted.
	Introduce
	\[
	m_j=1+t-\sum_{e\in\supp(\mathbf{s})}\theta_ec_{j,e},\qquad
	\lambda_j=\frac{r_j}{Qm_j+br_j}.
	\]
	Write $\ghost=(m_1,\ldots,m_d)^{\mathsf T}$ and $\lvec=(\lambda_1,\ldots,\lambda_d)^{\mathsf T}$.
	Unlike the actual remainder $r_j$, the comparison resource $m_j$ decreases only when an optimum element is selected.
	In particular, deleting an optimum element does not reduce $m_j$; its cost remains available in the comparison accounting.
	The residual optimum has total cost at most one. Consequently,
	\[
	m_j\ge t+\sum_{e\in\supp(\mathbf{o})\setminus\supp(\mathbf{y}\sqcup\mathbf{s})}c_{j,e},
	\qquad m_j-r_j=t+\sum_{e\in\supp(\mathbf{s})}(1-\theta_e)c_{j,e}\ge t.
	\]
	The slack $t$ keeps $m_j$ positive, and $0<\lambda_j\le1/2$.
	Each $\lambda_j$ is the single-resource Bellman coefficient for row $j$.
	To obtain one coefficient below every $\lambda_j$, take a smooth approximation of their minimum, subtract $t$, and truncate at zero:
	\begin{equation}\label{eq:envelope}
		\Lambda=\pos{-\tau\log\sum_{j=1}^d e^{-\lambda_j/\tau}-t}.
	\end{equation}
	Then $0\le\Lambda\le\min_j\lambda_j$.
	The smoothing parameter $\tau$ permits differentiation, while the shift by $t$ lets the comparison term vanish before a resource is exhausted.
	Initially, $r_j=1$ and $m_j=1+t$ for every $j$.
	Since $b=2-Q$, all single-budget coefficients have the same initial value:
	\[
	\lambda_j=\frac{1}{Q(1+t)+b}=\frac{1}{2+Qt}.
	\]
	Substituting these equal coefficients into~\eqref{eq:envelope} gives
	\begin{equation}\label{eq:initial}
		\Lambda_0
		=\pos{\frac{1}{2+Qt}-t-\tau\log d}
		=\frac{1}{2+Qt}-t-\tau\log d,
	\end{equation}
	where the parameter choices below ensure that the expression inside
	the positive part is strictly positive.
	When $\Lambda>0$, define
	\begin{equation}\label{eq:prices}
		w_j=\frac{e^{-\lambda_j/\tau}}{\sum_{\ell=1}^d e^{-\lambda_\ell/\tau}},\qquad
		\phi(\lambda)=\frac{\lambda(1-b\lambda)}{1-Q\lambda},\qquad
		p_j=w_j\frac{\phi(\lambda_j)}{r_j}.
	\end{equation}
	Here $w_j$ is the derivative coefficient of row $j$ in the smooth minimum, and $\phi$ converts its Bellman coefficient into a resource weight.
	The resulting $p_j$ is an ideal weight for the analysis, distinct from the frozen weight $\widehat p_j$ used by the algorithm.
	It depends on the unknown optimum through $m_j$; the algorithm will approximate it by guessing the coefficients $\lambda_j$.
	
	To bound the change in the resource coefficient caused by selecting $e$,
	consider the straight line between the resource states before and after
	the selection. For $u\in[0,1]$, define
	\[
	\begin{aligned}
		r_j(u)&=r_j(0)-u c_{j,e},&
		m_j(u)&=m_j(0)-u\theta_e c_{j,e},\\
		\lambda_j(u)&=\frac{r_j(u)}{Qm_j(u)+br_j(u)}.
	\end{aligned}
	\]
	The endpoints $u=0$ and $u=1$ correspond to the states before and after
	selecting the whole element.
	The safety test and the initial resource invariants give, throughout this path,
	\[
	\begin{aligned}
		r_j(u)&\ge(1-\kappa)r_j(0)>0,\\
		m_j(u)-r_j(u)&=m_j(0)-r_j(0)+(1-\theta_e)u c_{j,e}\ge t.
	\end{aligned}
	\]
	Hence $0<\lambda_j(u)\le1/2$ throughout the update.
	Writing a dot for differentiation with respect to $u$, we obtain
	\[
	-\dot\lambda_j(u)
	=\frac{c_{j,e}}{r_j(u)}\lambda_j(u)
	\bigl[1-(b+Q\theta_e)\lambda_j(u)\bigr].
	\]
	This path is used only to analyze the resource change; objective values
	are evaluated only at discrete labelings.
	
	Since $0<\lambda_j(u)\le1/2$ and $0<b+Q\theta_e\le2$,
	the factor $1-(b+Q\theta_e)\lambda_j(u)$ lies in $[0,1]$.
	Consequently,
	\begin{equation}\label{eq:relative}
		\begin{aligned}
			0\le-\frac{\mathrm d}{\mathrm du}\log\lambda_j(u)
			&=\frac{c_{j,e}}{r_j(u)}
			\bigl[1-(b+Q\theta_e)\lambda_j(u)\bigr]\\
			&\le\frac{c_{j,e}}{r_j(u)}
			=-\frac{\mathrm d}{\mathrm du}\log r_j(u).
		\end{aligned}
	\end{equation}
	Thus every $\lambda_j(u)$ is nonincreasing.
	The smooth minimum and positive-part operation preserve this
	monotonicity, so $\Lambda(u)$ is also nonincreasing.
	
	For the resource inequalities below, define
	$
	\beta_e=Q+b\theta_e.
	$
	Since $Q+b=2$, we have $\beta_e=Q$ when $e$ is outside the
	fixed optimum and $\beta_e=2$ when it belongs to the optimum.
	
	\begin{lemma}[Resource-weight inequalities]\label{lem:resource}
		While $\Lambda>0$, the resource weights in~\eqref{eq:prices} satisfy
		\[
		Q\sum_{j=1}^d p_jm_j\le1,\qquad
		-\dot\Lambda\le(1-\beta_e\Lambda)\prices^{\mathsf T}\cost_e.
		\]
	\end{lemma}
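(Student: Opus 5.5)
The plan is to verify both inequalities row by row from the defining identity of $\lambda_j$, and then average them with the softmax weights $w_j$, which are nonnegative and satisfy $\sum_j w_j=1$. The key algebraic fact comes from inverting $\lambda_j=r_j/(Qm_j+br_j)$:
\[
\frac{Qm_j}{r_j}=\frac{1}{\lambda_j}-b,\qquad\text{that is,}\qquad \frac{m_j}{r_j}=\frac{1-b\lambda_j}{Q\lambda_j}.
\]
Throughout I use $0<\lambda_j\le 1/2$ and $0\le\Lambda\le\min_j\lambda_j$, both established before the lemma. Together with $Q<2$, these give $1-Q\lambda_j>0$, so $\phi(\lambda_j)$ is well defined and positive.

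\emph{First inequality.} Substituting the ratio into $p_j=w_j\phi(\lambda_j)/r_j$ gives
\[
Qp_jm_j=w_j\,\frac{\lambda_j(1-b\lambda_j)}{1-Q\lambda_j}\cdot\frac{1-b\lambda_j}{\lambda_j}=w_j\,\frac{(1-b\lambda_j)^2}{1-Q\lambda_j}.
\]
Since $\sum_j w_j=1$, it suffices to show $(1-b\lambda)^2\le 1-Q\lambda$ for $0<\lambda\le1/2$. Expanding, this is equivalent to $\lambda\bigl(b^2\lambda-2b+Q\bigr)\le0$, that is, $b^2\lambda\le 2b-Q=4-3Q$. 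The left side is largest at $\lambda=1/2$, so it suffices to check the numerical inequality $(2-Q)^2/2\le 4-3Q$ for $Q=(1+\sqrt2)/2$. Numerically the two sides are about $0.314$ and $0.379$. I would record this exactly by expanding $(2-Q)^2$ and using $Q^2=Q+1/4$, which follows from $4Q^2-4Q-1=0$.

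\emph{Second inequality.} At a point where $\Lambda>0$, the positive part in~\eqref{eq:envelope} is inactive, and the derivative of the smooth minimum is the $w$-weighted average of the row derivatives. Hence $-\dot\Lambda=\sum_j w_j(-\dot\lambda_j)$. By the derivative formula for $\lambda_j(u)$, this equals
\[
\sum_j w_j\,\frac{c_{j,e}}{r_j}\,\lambda_j\bigl[1-(b+Q\theta_e)\lambda_j\bigr].
\]
The target is $(1-\beta_e\Lambda)\sum_j w_j\,\phi(\lambda_j)\,c_{j,e}/r_j$, so I compare term by term. Because $\Lambda\le\lambda_j$ and $\beta_e\ge 0$, we have $1-\beta_e\Lambda\ge 1-\beta_e\lambda_j$, and the latter is nonnegative since $\beta_e\le2$ and $\lambda_j\le1/2$. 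It therefore suffices to prove
\[
\bigl[1-(b+Q\theta_e)\lambda\bigr](1-Q\lambda)\le\bigl[1-(Q+b\theta_e)\lambda\bigr](1-b\lambda)
\]
for $0<\lambda\le1/2$, which I handle by cases on $\theta_e$:
\begin{itemize}
\item If $\theta_e=0$, both sides equal $(1-b\lambda)(1-Q\lambda)$, so the inequality holds with equality.
\item If $\theta_e=1$, then $b+Q=Q+b=2$ and the inequality becomes $(1-2\lambda)(1-Q\lambda)\le(1-2\lambda)(1-b\lambda)$. This holds because $1-2\lambda\ge0$ and $b\le Q$.
\end{itemize}
Multiplying each row inequality by $w_jc_{j,e}/r_j\ge0$ and summing over $j$ gives the claim.

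The main subtlety is interpreting $\dot\Lambda$. The lemma should be read pointwise along the path $u\mapsto(\res(u),\ghost(u))$, with $p_j$ evaluated at the same $u$ at every point where $\Lambda(u)>0$; there the positive part is inactive and $\Lambda$ is differentiable. The only genuinely numerical step is the constant check $b^2/2\le 2b-Q$ in the first inequality. I would verify it exactly rather than rely on the decimals, since the margin, while comfortable, is fixed by the specific constants of Chen et al.
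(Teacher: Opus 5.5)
Your proposal is correct and follows essentially the same route as the paper. The paper also inverts $\lambda_j$ to get $Qp_jm_j/w_j=(1-b\lambda_j)^2/(1-Q\lambda_j)$ and reduces the first inequality to $Q^2+2Q-4<0$; your check $(2-Q)^2/2\le 4-3Q$ is the same condition. It proves the second inequality by the same row-wise comparison split on $\theta_e$, using $b\le Q$ and $\Lambda\le\lambda_j$, and then sums with $\dot\Lambda=\sum_j w_j\dot\lambda_j$.
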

	The first inequality bounds the total weighted comparison resource.
	The second bounds the decrease in the common coefficient caused by selecting $e$.
	They will use the same weighted cost in the density and potential calculations, allowing that cost to cancel.
	\begin{proof}
		Since $Qm_j/r_j=1/\lambda_j-b$,
		\[
		\frac{Qp_jm_j}{w_j}=\frac{(1-b\lambda_j)^2}{1-Q\lambda_j}.
		\]
		The denominator is positive since $Q<2$ and $\lambda_j\le1/2$. The numerator minus the denominator is
		\[
		\lambda_j(3Q-4+b^2\lambda_j)
		\le\frac{\lambda_j}{2}(Q^2+2Q-4)<0,
		\]
		where $Q^2+2Q-4=(6\sqrt2-9)/4<0$.
		Summing $Qp_jm_j\le w_j$ proves the first inequality.
		For $\theta_e=0$,
		\[
		w_j\frac{\lambda_j(1-b\lambda_j)}{r_j}
		=(1-Q\lambda_j)p_j\le(1-Q\Lambda)p_j.
		\]
		For $\theta_e=1$, $Q\ge b$ gives
		\[
		w_j\frac{\lambda_j(1-2\lambda_j)}{r_j}
		=(1-2\lambda_j)p_j\frac{1-Q\lambda_j}{1-b\lambda_j}
		\le(1-2\Lambda)p_j.
		\]
		Differentiating the smooth minimum yields $\dot\Lambda=\sum_jw_j\dot\lambda_j$ whenever $\Lambda>0$.
		Multiplication by $c_{j,e}$ and summation prove the second inequality.
		When $\Lambda=0$, it remains zero along the update and its derivative is zero almost everywhere.
	\end{proof}
	
	\subsection{The whole-element potential}
	The selection rule~\eqref{eq:density} uses frozen weights
	$\widehat p_j$, whereas the resource inequalities involve
	the ideal weights $p_j$, which change as resources are consumed.
	For the selected element $e$, assume that these weights remain
	within a factor $R\ge1$ throughout its resource update:
	\begin{equation}\label{eq:distortion}
		R^{-1}\widehat p_j\le p_j(u)\le R\widehat p_j
		\qquad
		(j\in[d],\ u\in[0,1]\text{ with }\Lambda(u)>0).
	\end{equation}
	We will establish this approximation for the implemented weights
	in Lemma~\ref{lem:price}.
	
	To compare the objective gain with the loss in the comparison
	term, define
	\[
	\Psi=Kf_{\mathbf y}(\mathbf s)+\Lambda\pos{D},
	\qquad K=R^2.
	\]
	The first term is the current objective value scaled by $K$.
	The second weights the positive comparison gap by $\Lambda$;
	it vanishes when the comparison labeling has no larger value
	than the current labeling.
	
	The proof below shows that the conditional expected decrease in
	the comparison term is at most $R^2G_e$.
	Since the conditional expected gain is $G_e$,
	the choice $K=R^2$ makes the expected increase in the first
	term sufficient to cover this loss.
	Thus a whole-element selection does not decrease $\Psi$
	in conditional expectation.
	
	\begin{lemma}[Potential change under one selection]\label{lem:integer}
		Let $\mathcal F$ be the history immediately before sampling the
		selected element's label, after any unsafe deletions.
		For a positive-cost selection, assume~\eqref{eq:distortion}.
		Let $\Psi'$ denote the potential immediately after the selection
		and before any further deletions. Then
		\[
		\E[\Psi'\mid\mathcal F]\ge\Psi.
		\]
		The same conclusion holds for zero-cost selections without
		a weight approximation condition.
	\end{lemma}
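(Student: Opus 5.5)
We split $\Psi'-\Psi$ into the change in the value term and the change in the comparison term. The value term is easy: conditioning on $\mathcal F$, the expected gain is exactly $G_e$, so $\E[Kf_{\mathbf y}(\mathbf s')\mid\mathcal F]=Kf_{\mathbf y}(\mathbf s)+KG_e$.

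For the comparison term, we write
\[
\Lambda'\pos{D'}-\Lambda\pos D=\Lambda'\bigl(\pos{D'}-\pos D\bigr)+(\Lambda'-\Lambda)\pos D .
\]
We use $\pos{D'}\ge D'$ and $D'-D=(f_{\mathbf y}(\mathbf z')-f_{\mathbf y}(\mathbf z))-(f_{\mathbf y}(\mathbf s')-f_{\mathbf y}(\mathbf s))$. When $\pos D=D$, this gives
\[
\E[\pos{D'}-\pos D\mid\mathcal F]\ge-(\ell_e+G_e),
\]
where $\ell_e$ is the comparison loss. When $D<0$, we only need $\pos{D'}\ge0=\pos D$.

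We bound $\ell_e$ by the certificates from the previous subsection. For a replacement ($\theta_e=1$) we have $\ell_e\le G_e$, so the value-plus-loss drop is at most $2G_e=\beta_eG_e$. For a pure addition ($\theta_e=0$) we have $\ell_e\le AG_e$, and the drop is at most $(1+A)G_e=QG_e=\beta_eG_e$. Since $\Lambda'\le\Lambda$ is deterministic given $\mathcal F$ (the costs and $\theta_e$ are fixed), the first part is at least $-\Lambda'\beta_eG_e\ge-\Lambda\beta_eG_e$.

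The second part is $-(\Lambda-\Lambda')\pos D$, and we bound $D$ by the density. The comparison labeling differs from $\mathbf s$ by the residual unselected, undeleted optimum elements $O'\subseteq U$. Orthant submodularity gives
\[
D\le\sum_{o\in O'}\Delta_{o,\mathbf o(o)}f_{\mathbf y}(\mathbf s)\le Q\sum_{o\in O'}G_o
\]
by Lemma~\ref{lem:top2}. Since $e$ maximizes $G/\hprices^{\mathsf T}\cost$, we have $G_o\le \hprices^{\mathsf T}\cost_o\,G_e/\hprices^{\mathsf T}\cost_e$. Summing gives $\sum_o c_{j,o}\le m_j$ (using $m_j\ge t+{}$residual cost), so
\[
\pos D\le Q\,\frac{G_e}{\hprices^{\mathsf T}\cost_e}\sum_j\widehat p_jm_j .
\]
Here the residual cost is measured at the start state $u=0$, and we must check that $O'$ consists of candidates (deleted ones are excluded, which is fine).

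Next, integrating the second inequality of Lemma~\ref{lem:resource} along $u\in[0,1]$ gives
\[
\Lambda-\Lambda'\le\int_0^1(1-\beta_e\Lambda(u))\,\prices(u)^{\mathsf T}\cost_e\,du .
\]
We apply \eqref{eq:distortion} twice: $\prices(u)^{\mathsf T}\cost_e\le R\,\hprices^{\mathsf T}\cost_e$ in the decrease, and $\sum_j\widehat p_jm_j\le R\sum_jp_j(0)m_j\le R/Q$ via the first inequality of Lemma~\ref{lem:resource} at $u=0$. The weighted cost then cancels, leaving
\[
(\Lambda-\Lambda')\pos D\le R^2G_e\int_0^1(1-\beta_e\Lambda(u))\,du\le R^2G_e(1-\beta_e\Lambda'),
\]
where the last step uses monotonicity of $\Lambda(u)$.

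Combining the pieces, the expected change in $\Psi$ is at least
\[
KG_e-\Lambda\beta_eG_e-R^2G_e(1-\beta_e\Lambda').
\]
This crude combination loses the term $\beta_eG_e(\Lambda-\Lambda')$. The \textbf{main obstacle} is to avoid that loss, and we would do it by pairing more carefully: bound the first part by $-\Lambda'\beta_eG_e$ rather than $-\Lambda\beta_eG_e$. This is valid since we multiplied by $\Lambda'$ exactly, and then we use $K=R^2\ge1$. That yields
\[
R^2G_e-\Lambda'\beta_eG_e-R^2G_e+R^2\beta_e\Lambda'G_e\ge0 .
\]
Where $\Lambda(u)$ hits zero mid-path, we use that the derivative vanishes almost everywhere, as in the lemma's proof.

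For zero-cost selections, $\Lambda$ does not change, so no weights are needed. The value term gains $KG_e$ and the comparison term loses at most $\Lambda\beta_eG_e\le G_e\le KG_e$, since $\Lambda\le1/2$ and $\beta_e\le2$.
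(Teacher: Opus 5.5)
Your proof is correct and follows the paper's argument essentially step for step: the gap-step bound $\E[D'\mid\mathcal F]\ge D-\beta_eG_e$, the density bound $D\le RG_e/a_e$ via the first resource inequality, the integrated coefficient decrease $\Lambda-\Lambda'\le Ra_e(1-\beta_e\Lambda')$, cancellation of the weighted cost to give the factor $R^2=K$, and the final surplus $(K-1)\beta_e\Lambda'G_e\ge0$. The ``main obstacle'' you flag is not really one, since your decomposition already multiplies the change in $\pos{D}$ by $\Lambda'$; the one detail worth adding is that the elements of $\supp(\mathbf z)\setminus\supp(\mathbf s)$ have nonzero cost vectors (zero-cost elements are processed first), which is what makes the density comparison with them well defined.
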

	
	\begin{proof}
		Let $e$ be the selected element.
		Conditional on $\mathcal F$, both $e$ and its expected gain $G_e$
		are fixed.
		Since costs are label-independent, the resource update and hence
		the new coefficient $\Lambda'$ are also fixed before the label
		is sampled.
		
		We first bound the change in the comparison gap.
		If $\theta_e=1$, then $e$ still carries its optimum label in
		$\mathbf z$.
		Set
		\[
		v_i=\Delta_{e,i}f_{\mathbf y}(\mathbf z-e).
		\]
		By~\eqref{eq:orthant}--\eqref{eq:pairwise},
		$\mathbf v\le\mathbf g_e$ and $\mathbf v$ is pairwise monotone.
		The replacement bound in Lemma~\ref{lem:top2} therefore gives
		\[
		\mathbf q_e^{\mathsf T}\mathbf v-v_{\mathbf o(e)}
		\ge -G_e.
		\]
		If $\theta_e=0$, set
		$v_i=\Delta_{e,i}f_{\mathbf y}(\mathbf z)$ instead.
		The same properties hold, and the pure-addition bound gives
		\[
		\mathbf q_e^{\mathsf T}\mathbf v\ge-AG_e.
		\]
		These expressions are the expected changes in the comparison
		value in the two cases.
		Subtracting the expected increase $G_e$ in the current value,
		and recalling that $\beta_e=2$ when $\theta_e=1$ and
		$\beta_e=Q=1+A$ otherwise, yields
		\begin{equation}\label{eq:gapstep}
			\E[D'\mid\mathcal F]\ge D-\beta_eG_e,
			\qquad
			\E[f_{\mathbf y}(\mathbf s')-f_{\mathbf y}(\mathbf s)
			\mid\mathcal F]=G_e.
		\end{equation}
		
		Now suppose that $D>0$, $\Lambda>0$, and $\cost_e\ne\zero$.
		Write
		\[
		a_e=\hprices^{\mathsf T}\cost_e>0
		\]
		for the weighted cost of the selected element.
		Every element in
		$\supp(\mathbf z)\setminus\supp(\mathbf s)$ remains a candidate.
		Its cost vector is nonzero because all zero-cost elements were
		processed first.
		Adding these elements in their optimum labels and using
		orthant submodularity gives
		\begin{equation}\label{eq:marginalgap}
			D\le
			\sum_{a\in\supp(\mathbf z)\setminus\supp(\mathbf s)}
			\Delta_{a,\mathbf o(a)}f_{\mathbf y}(\mathbf s)
			\le
			Q\sum_{a\in\supp(\mathbf z)\setminus\supp(\mathbf s)}G_a,
		\end{equation}
		where the second inequality follows from Lemma~\ref{lem:top2}.
		This bound does not depend on the selected element or the resource weights.
		Density maximality bounds each $G_a$ by
		$(G_e/a_e)\hprices^{\mathsf T}\cost_a$.
		Moreover, the total cost vector of these comparison elements
		is bounded componentwise by $\ghost$.
		Consequently,
		\begin{equation}\label{eq:densitygap}
			D\le Q\frac{G_e}{a_e}\hprices^{\mathsf T}\ghost
			\le R\frac{G_e}{a_e}.
		\end{equation}
		The last inequality uses~\eqref{eq:distortion} at the initial
		state and $Q\prices^{\mathsf T}\ghost\le1$ from
		Lemma~\ref{lem:resource}.
		
		Since $\Lambda(u)$ is absolutely continuous, we may integrate
		the second inequality of Lemma~\ref{lem:resource} along the
		resource update, only while $\Lambda(u)>0$.
		Since
		$\prices(u)^{\mathsf T}\cost_e\le Ra_e$ and
		$\Lambda(u)\ge\Lambda'$, we obtain
		\[
		\Lambda-\Lambda'
		\le Ra_e(1-\beta_e\Lambda').
		\]
		Here the interpolation interval has length one; if $\Lambda$
		reaches zero earlier, it remains zero thereafter.
		Multiplying this bound by~\eqref{eq:densitygap} cancels the
		weighted cost $a_e$ and gives
		\[
		(\Lambda-\Lambda')D
		\le R^2(1-\beta_e\Lambda')G_e
		=K(1-\beta_e\Lambda')G_e.
		\]
		Thus the two uses of the weight approximation produce the
		factor $R^2$.
		
		Conditional Jensen and~\eqref{eq:gapstep} give
		\[
		\E[\pos{D'}\mid\mathcal F]
		\ge \pos{\E[D'\mid\mathcal F]}
		\ge D-\beta_eG_e.
		\]
		Since $\Lambda'$ is fixed conditional on $\mathcal F$,
		the expected decrease in the comparison term satisfies
		\begin{align*}
			\Lambda\pos{D}
			-\E[\Lambda'\pos{D'}\mid\mathcal F]
			&\le(\Lambda-\Lambda')D+\beta_e\Lambda'G_e\\
			&\le K(1-\beta_e\Lambda')G_e+\beta_e\Lambda'G_e\\
			&=\bigl[K-(K-1)\beta_e\Lambda'\bigr]G_e\\
			&\le KG_e=R^2G_e.
		\end{align*}
		The first line accounts for both the decrease in the coefficient
		and the change in the comparison gap.
		The objective term has conditional expected increase $KG_e$,
		so
		\[
		\E[\Psi'-\Psi\mid\mathcal F]
		\ge (K-1)\beta_e\Lambda'G_e
		\ge0.
		\]
		
		It remains to handle the other cases.
		If $D\le0$, the old comparison term is zero, the new comparison
		term is nonnegative, and every actual gain is nonnegative.
		If $\Lambda=0$, it remains zero, so the nonnegative objective
		gain again suffices.
		Finally, a zero-cost selection leaves all resources unchanged,
		and hence $\Lambda'=\Lambda$.
		When $D>0$, the same Jensen bound gives
		\[
		\E[\Psi'-\Psi\mid\mathcal F]
		\ge (K-\beta_e\Lambda)G_e
		\ge0,
		\]
		because $K\ge1$, $\beta_e\le2$, and $\Lambda\le1/2$.
	\end{proof}
	
	\subsection{Paying for unsafe optimum elements}
	The density argument in~\eqref{eq:densitygap} compares the selected element with every remaining element of $\mathbf{z}$.
	These elements must therefore remain in $U$.
	Whenever an optimum element becomes unsafe, removing it from both $U$ and $\mathbf{z}$ restores this property but may lower the potential.
	Let
	\[
	L=\max_{e\in\supp(\mathbf{o})\setminus\supp(\mathbf{y})}\pos{\Delta_{e,\mathbf{o}(e)}f(\mathbf{y})},
	\]
	with $L=0$ if the set is empty.
	This single bound, measured at the fixed prefix, controls every later deletion.
	Removing an optimum element $e$ from $\mathbf{z}$ can decrease $f_{\mathbf{y}}(\mathbf{z})$ by at most $L$:
	its removal loss is $\Delta_{e,\mathbf{o}(e)}f(\mathbf{y}\sqcup(\mathbf{z}-e))\le L$ by~\eqref{eq:orthant}.
	The potential decrease is at most
	\[
	\Lambda\pos{f_{\mathbf{y}}(\mathbf{z})-f_{\mathbf{y}}(\mathbf{z}-e)}\le\Lambda L,
	\]
	because the positive-part function is nondecreasing and $1$-Lipschitz, and deletion leaves $\Lambda$ unchanged.
	Deleting an element outside the residual optimum leaves the comparison labeling and potential unchanged.
	
	\begin{lemma}[Total deletion charge]\label{lem:deletion}
		On every realized continuation for $\mathbf{y}\preceq\mathbf{o}$,
		\[
		\sum_{e\text{ deleted from the residual optimum}}\Lambda_e\le\frac{d}{\kappa Q},
		\]
		where $\Lambda_e$ is the coefficient at the deletion of $e$.
	\end{lemma}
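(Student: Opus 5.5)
The plan is to assign each deleted optimum element to one row that witnessed its unsafety, bound its coefficient by that row's single-resource coefficient, and then sum row by row. Each row will contribute at most $1/(\kappa Q)$, giving $d/(\kappa Q)$ in total. The argument is pathwise: it uses only the deterministic resource bookkeeping, so it holds on every realized continuation.

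\textbf{Step 1 (charging a single deletion).} Suppose the optimum element $e$ is deleted at some moment. Let $\res,\ghost$ be the resource states at that moment. By the deletion rule there is a row $j=j(e)$ with $c_{j,e}>\kappa r_j$; fix one such row (first in the fixed tie-breaking order). By~\eqref{eq:envelope} we have $\Lambda_e\le\min_\ell\lambda_\ell\le\lambda_j$. Since $b>0$,
\[
\Lambda_e\le\lambda_j=\frac{r_j}{Qm_j+br_j}\le\frac{r_j}{Qm_j}<\frac{c_{j,e}}{\kappa Q\,m_j}.
\]
Here $m_j>0$ because $m_j\ge r_j+t\ge t$.

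\textbf{Step 2 (compare with the final comparison resource).} Along a run, $m_j$ changes only when an optimum element is selected, and then it decreases. Hence $m_j$ is nonincreasing in time. Let $m_j^{\mathrm{fin}}$ be its value at the end of the continuation. Then $m_j\ge m_j^{\mathrm{fin}}$ at every deletion time, so $\Lambda_e<c_{j,e}/(\kappa Q\,m_j^{\mathrm{fin}})$.

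We also need a lower bound on $m_j^{\mathrm{fin}}$. It equals $1+t$ minus the row-$j$ costs of the optimum elements that are ever selected. The residual optimum has total row-$j$ cost at most one. Therefore
\[
m_j^{\mathrm{fin}}\ge t+\sum_{a\text{ never selected}}\theta_a c_{j,a}.
\]
Deleted elements are never selected, so this sum contains every deleted optimum element, in particular all those charged to row $j$. This gives
\[
\sum_{e:\,j(e)=j}c_{j,e}\le m_j^{\mathrm{fin}}-t<m_j^{\mathrm{fin}}.
\]

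\textbf{Step 3 (sum).} Summing Step 2 over the elements charged to row $j$ gives
\[
\sum_{e:\,j(e)=j}\Lambda_e\le\frac{1}{\kappa Q\,m_j^{\mathrm{fin}}}\sum_{e:\,j(e)=j}c_{j,e}\le\frac{1}{\kappa Q}.
\]
Summing over the $d$ rows proves the lemma. Deletions with $\Lambda_e=0$ contribute nothing and need no special treatment.

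\textbf{Main obstacle.} The step to get right is Step 2. A naive bound charges each deletion against the value of $m_j$ at its own time, with $m_j$ at least the cost still to be deleted later. That only produces a harmonic-type sum, which can grow logarithmically. The fix is that the comparison resource never credits deleted elements: $m_j$ is monotone, and its final value still contains all deleted costs. So every deletion can be measured against the single quantity $m_j^{\mathrm{fin}}$.
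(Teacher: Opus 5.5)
Your proof is correct and follows essentially the same argument as the paper: you charge each deleted optimum element to a witnessing row, bound $\Lambda_e\le\lambda_j<c_{j,e}/(\kappa Q m_j)$, and compare against a time-uniform lower bound on $m_j$ that still contains the costs of every deleted element. The only difference is cosmetic: the paper uses $T_j$, the total row-$j$ cost of the elements charged to $j$, through $m_j\ge t+T_j$ at every deletion time, whereas you pass through the monotone final value $m_j^{\mathrm{fin}}\ge t+T_j$, which also covers the case $T_j=0$ without a separate remark.
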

	\begin{proof}
		Assign each deleted optimum element to one row $j$ witnessing $c_{j,e}>\kappa r_j$.
		Let $T_j$ be the total row-$j$ cost of all optimum elements eventually assigned to $j$.
		They are never selected. The lower bound on $m_j$ in Section~\ref{sec:core} therefore gives $m_j\ge T_j$ at every deletion time.
		If $T_j>0$, then for each assigned element
		\[
		\Lambda_e\le\lambda_j=\frac{r_j}{Qm_j+br_j}
		\le\frac{r_j}{Qm_j}
		<\frac{c_{j,e}}{\kappa Qm_j}
		\le\frac{c_{j,e}}{\kappa QT_j}.
		\]
		Summing over that row gives at most $1/(\kappa Q)$.
		If $T_j=0$, no element is assigned to $j$, since a blocking cost is strictly positive.
		Summing the row bounds proves the claimed bound.
		Future assignments appear only in this pathwise analysis; the algorithm does not compute $T_j$.
	\end{proof}
	Hence the total potential decrease from deletions is at most $dL/(\kappa Q)$ on each realization, even if many optimum elements are deleted.
	We next choose a prefix making $L$ small relative to the value already obtained.
	
	\begin{lemma}[An optimum prefix]\label{lem:prefix}
		For every integer $H\ge1$, either a restriction of $\mathbf{o}$ of size at most $H$ has value $\OPT$, or there is a restriction $\mathbf{y}\preceq\mathbf{o}$ of size $H$ satisfying $L\le f(\mathbf{y})/H$.
	\end{lemma}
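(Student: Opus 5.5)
The plan is a greedy construction inside the optimum, in the style of Sviridenko's prefix argument. Build $\mathbf{y}_0=\zero\preceq\mathbf{y}_1\preceq\cdots$ by letting $\mathbf{y}_{h+1}=\mathbf{y}_h+\mathbf{o}(e_{h+1})\one_{e_{h+1}}$. Here $e_{h+1}$ maximizes $\Delta_{e,\mathbf{o}(e)}f(\mathbf{y}_h)$ over the remaining elements of $\supp(\mathbf{o})\setminus\supp(\mathbf{y}_h)$, with the fixed tie-breaking. Every $\mathbf{y}_h$ is a restriction of $\mathbf{o}$ of size $h$. Write $L_h=\max_{e}\pos{\Delta_{e,\mathbf{o}(e)}f(\mathbf{y}_h)}$ for the quantity $L$ at the prefix $\mathbf{y}_h$.

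\textbf{Case split.} If $|\supp(\mathbf{o})|\le H$, then $\mathbf{o}$ itself is a restriction of size at most $H$ with value $\OPT$, and the first alternative holds. Otherwise the construction runs for $H$ steps, and the candidate for the second alternative is $\mathbf{y}=\mathbf{y}_H$. The goal is then $L_H\le f(\mathbf{y}_H)/H$.

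\textbf{Key monotonicity step.} Fix any element $e$ that remains in $\supp(\mathbf{o})\setminus\supp(\mathbf{y}_H)$. For every $h<H$, the element $e$ was also available at step $h+1$. Greedy maximality and then orthant submodularity~\eqref{eq:orthant}, using $\mathbf{y}_h\preceq\mathbf{y}_H$ and $e\notin\supp(\mathbf{y}_H)$, give
\[
f(\mathbf{y}_{h+1})-f(\mathbf{y}_h)=\Delta_{e_{h+1},\mathbf{o}(e_{h+1})}f(\mathbf{y}_h)\ge\Delta_{e,\mathbf{o}(e)}f(\mathbf{y}_h)\ge\Delta_{e,\mathbf{o}(e)}f(\mathbf{y}_H).
\]
Summing over $h=0,\ldots,H-1$ telescopes to $f(\mathbf{y}_H)-f(\zero)\ge H\,\Delta_{e,\mathbf{o}(e)}f(\mathbf{y}_H)$. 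Nonnegativity gives $f(\zero)\ge0$, so $\Delta_{e,\mathbf{o}(e)}f(\mathbf{y}_H)\le f(\mathbf{y}_H)/H$. Because $f(\mathbf{y}_H)\ge0$, the same bound holds for the positive part. Taking the maximum over $e$ gives $L_H\le f(\mathbf{y}_H)/H$. If no element remains, then $L=0$ by convention and the bound is trivial.

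\textbf{Remaining subtlety.} The greedy increments may be negative in the non-monotone case. This causes no difficulty: the chain above compares each increment with a marginal at the final prefix, and no sign is needed except in the final positive part, which is handled by $f(\mathbf{y}_H)\ge0$. Consequently the value-$\OPT$ alternative is needed only when $\mathbf{o}$ has at most $H$ elements. The first alternative could also be invoked whenever some short restriction already attains $\OPT$, but the argument does not require it.

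\textbf{Main obstacle.} The step I expect to need care is checking that~\eqref{eq:orthant} is applied legitimately. This requires that $e$ is unselected in $\mathbf{y}_H$, and that the label $\mathbf{o}(e)$ stays fixed along the whole chain.
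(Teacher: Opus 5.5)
Your proof is correct and is essentially the paper's argument: greedy selection inside $\mathbf{o}$ in the original labels, then comparing every remaining optimum marginal at $\mathbf{y}_H$ with the average greedy increment $[f(\mathbf{y}_H)-f(\zero)]/H$. The only difference is that the paper stops the greedy once no positive marginal remains and then shows that this short prefix already has value $\OPT$, whereas you keep selecting through possibly negative increments and use the first alternative only when $|\supp(\mathbf{o})|\le H$; comparing each increment directly with $\Delta_{e,\mathbf{o}(e)}f(\mathbf{y}_H)$ makes this sign-free version work.
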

	\begin{proof}
		Starting from $\zero$, repeatedly choose an unselected optimum element with maximum positive marginal in its original optimum label.
		The chosen marginals are positive and nonincreasing by~\eqref{eq:orthant}.
		If the process stops before $H$ selections, all remaining original-label marginals are nonpositive.
		Adding them in any order and using~\eqref{eq:orthant} yields $f(\mathbf{o})\le f(\mathbf{y})$.
		Feasibility of $\mathbf{y}$ gives the reverse inequality.
		Otherwise the $H$th marginal bounds every remaining original-label marginal from above, and is at most the average of the first $H$ marginals, namely $[f(\mathbf{y})-f(\zero)]/H\le f(\mathbf{y})/H$.
	\end{proof}
	
	\begin{proposition}[Bellman comparison]\label{prop:comparison}
		Suppose $\Lambda_0>0$, weight distortion~\eqref{eq:distortion} holds for every positive-cost selection, and the continuation terminates in a state satisfying $\Lambda\pos{D}=0$.
		For the prefix from Lemma~\ref{lem:prefix}, if
		\begin{equation}\label{eq:prefixpayment}
			H(K-\Lambda_0)\ge\frac{d}{\kappa Q},
		\end{equation}
		then $\E f(\mathbf{y}\sqcup\mathbf{s}_{\rm fin})\ge(\Lambda_0/K)\OPT$.
	\end{proposition}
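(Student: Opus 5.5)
We argue by tracking the potential $\Psi=Kf_{\mathbf y}(\mathbf s)+\Lambda\pos D$ along the continuation, treat the prefix case split of Lemma~\ref{lem:prefix} separately, and compare initial and final values of $\Psi$.

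\emph{Degenerate case.} First dispose of the case where a restriction of $\mathbf o$ of size at most $H$ already has value $\OPT$. In that case I expect the enumeration to contain this restriction as a candidate itself. Within the proposition, the natural reading is that the prefix $\mathbf y$ has $f(\mathbf y)=\OPT$. Then $L=0$ and every remaining optimum marginal at $\mathbf y$ is nonpositive, so no deletion charge arises. The same potential argument below then gives $\E f(\mathbf y\sqcup\mathbf s_{\rm fin})\ge f(\mathbf y)+(\Lambda_0/K)\pos{D_0}\ge(\Lambda_0/K)\OPT$. Alternatively, $K f(\mathbf y)\ge\Lambda_0\OPT$ follows directly since $\Lambda_0\le1/2\le K$.

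\emph{Initial value.} In the main case, $\mathbf y$ has size $H$ and $L\le f(\mathbf y)/H$. At the start, $\mathbf s=\zero$ and $\mathbf z$ is the residual optimum, so $f_{\mathbf y}(\mathbf s)=f(\mathbf y)$ and $f_{\mathbf y}(\mathbf z)=f(\mathbf o)=\OPT$. Hence $D_0=\OPT-f(\mathbf y)$ and
\[
\Psi_0\ge Kf(\mathbf y)+\Lambda_0(\OPT-f(\mathbf y))=\Lambda_0\OPT+(K-\Lambda_0)f(\mathbf y).
\]
This holds whether or not $D_0$ is positive, because $\pos{D_0}\ge D_0$. The surplus $(K-\Lambda_0)f(\mathbf y)$ is what will pay for deletions.

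\emph{Evolution and deletions.} Lemma~\ref{lem:integer} shows that each selection, positive-cost under \eqref{eq:distortion} or zero-cost, does not decrease $\Psi$ in conditional expectation. Each deletion of a residual optimum element $e$ lowers $\Psi$ by at most $\Lambda_eL$. Deletions outside the optimum change nothing. I will therefore define the adjusted process
\[
\widetilde\Psi=\Psi+L\sum_{\text{deleted } e\text{ so far}}\Lambda_e,
\]
which is a submartingale along the run. The run length is bounded by $n$, since each step selects or deletes an element, so optional stopping is trivial. This gives $\E\widetilde\Psi_{\rm fin}\ge\Psi_0$. Lemma~\ref{lem:deletion} bounds the accumulated sum pathwise by $d/(\kappa Q)$, so
\[
\E\Psi_{\rm fin}\ge\Psi_0-\frac{dL}{\kappa Q}\ge \Lambda_0\OPT+(K-\Lambda_0)f(\mathbf y)-\frac{d f(\mathbf y)}{\kappa QH}.
\]
Condition \eqref{eq:prefixpayment} makes the last two terms together nonnegative, using $f(\mathbf y)\ge0$.

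\emph{Conclusion.} At termination $\Lambda\pos D=0$, so $\Psi_{\rm fin}=Kf(\mathbf y\sqcup\mathbf s_{\rm fin})$. Dividing by $K$ gives the claim.

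The main obstacle I expect is the bookkeeping of the interleaving of deletions and selections. Lemma~\ref{lem:integer} conditions on the history after deletions, while the deletion charge is pathwise. So the submartingale step must be set up at the granularity of individual deletion and selection events, with $\mathcal F$ updated accordingly. One must also check that \eqref{eq:distortion} is only required on positive-cost steps.
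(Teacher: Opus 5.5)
Your proposal is correct and takes essentially the paper's route: the same initial potential $\Lambda_0\OPT+(K-\Lambda_0)f(\mathbf y)$, and the augmented potential $\Psi+L\sum\Lambda_e$, which is pathwise nondecreasing at deletions and a submartingale at selections by Lemma~\ref{lem:integer}. Like the paper, you then absorb the deletion charge using Lemma~\ref{lem:deletion}, $L\le f(\mathbf y)/H$ and \eqref{eq:prefixpayment}, and remove the comparison term by the terminal condition. The only difference is in the degenerate case, where the paper simply observes that all actual gains are nonnegative, so $f(\mathbf y\sqcup\mathbf s_{\rm fin})\ge f(\mathbf y)=\OPT$ pathwise; your ``alternatively'' line needs exactly this fact, whereas your potential argument with $L=0$ works without it.
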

	\begin{proof}
		If the first alternative of Lemma~\ref{lem:prefix} holds, every continuation has value at least $\OPT$, because all actual gains are nonnegative.
		Otherwise take its size-$H$ restriction.
		Initially $\mathbf{s}=\zero$, $\mathbf{z}$ is the residual optimum, and
		\[
		\Psi_0=Kf(\mathbf{y})+\Lambda_0(\OPT-f(\mathbf{y})).
		\]
		The gap is nonnegative because $\mathbf{y}$ is feasible.
		Equivalently, $\Psi_0=\Lambda_0\OPT+(K-\Lambda_0)f(\mathbf{y})$.
		The second term is the initial surplus available to pay the deletion charge; condition~\eqref{eq:prefixpayment} makes that surplus sufficient when $L\le f(\mathbf{y})/H$.
		Add $L\sum_{e\text{ deleted so far}}\Lambda_e$ to $\Psi$.
		Every deletion leaves this augmented potential nondecreasing, and Lemma~\ref{lem:integer} gives a nonnegative conditional increment at every selection.
		There are at most $n$ selections and $n$ deletions; padding terminated paths with identity steps gives a deterministic finite horizon.
		The tower property and Lemma~\ref{lem:deletion} therefore give
		\begin{align*}
			K\E f(\mathbf{y}\sqcup\mathbf{s}_{\rm fin})
			&\ge\Lambda_0\OPT+(K-\Lambda_0)f(\mathbf{y})-\frac{dL}{\kappa Q}\\
			&\ge\Lambda_0\OPT+\left(K-\Lambda_0-\frac{d}{\kappa QH}\right)f(\mathbf{y})
			\ge\Lambda_0\OPT.
		\end{align*}
		The terminal-state assumption makes the final comparison term zero.
	\end{proof}
	The continuation does not need to test the unknown quantities $\Lambda$ or $D$.
	Lemma~\ref{lem:terminal} will verify this terminal condition using observable stopping rules.
	
	\section{A finite combinatorial algorithm}\label{sec:algorithm}
	The analysis used two objects the algorithm cannot compute: a good optimum prefix and ideal resource weights determined by optimum membership.
	We enumerate the small prefixes, and then enumerate coefficient guesses in phases to approximate the ideal resource weights.
	The tree construction below preserves the sampled labels already obtained before each new guess.
	
	\subsection{Parameters}
	For $0<\varepsilon<1/2$, set
	\[
	t=\frac{\varepsilon}{5/2+Q/4},\qquad
	\tau=\frac{t}{1+\log d},\qquad
	\rho=\frac{Qt^2}{1-2t}.
	\]
	The logarithmic derivative of $\phi$ from~\eqref{eq:prices} satisfies
	\begin{equation}\label{eq:condition}
		(\log\phi)'(\lambda)=\frac1\lambda+
		\frac{Q-b}{(1-b\lambda)(1-Q\lambda)}
		\le\frac1\lambda+c_Q,\qquad
		c_Q=\frac{8(Q-1)}{Qb}<2
		\quad(0<\lambda\le1/2).
	\end{equation}
	Choose
	\begin{equation}\label{eq:parameters}
		h=\frac{t}{2(2/\tau+2/t+c_Q+2)},\quad
		\kappa=1-e^{-h},\quad R=e^{t/2},\quad K=e^t,\quad
		H=\ceil{\frac{d}{\kappa Q(K-\Lambda_0)}}.
	\end{equation}
	Here $\Lambda_0$ is given by~\eqref{eq:initial}.
	The phase width $h$ makes the logarithmic weight error at most $t/2$, and $\kappa$ limits the overshoot caused by the last whole element in a phase.
	The prefix size $H$ is exactly the amount required by~\eqref{eq:prefixpayment}.
	Appendix~\ref{app:parameters} verifies the approximation and parameter orders.
	The grid is $\mathcal G=\{0,h,\ldots,h\floor{1/(2h)}\}$.
	
	\subsection{Prefix enumeration and continuation trees}
	Enumerate every feasible labeled restriction $\mathbf{y}$ of size at most $H$, including $\zero$.
	The term ``prefix'' refers to a restriction retained before the continuation; the enumeration need not enumerate different orderings of the same restriction.
	For each prefix, form $f_{\mathbf{y}}$ and residual normalized costs as in Section~\ref{sec:core}, using a fresh copy of the normalized input costs.
	Retain $\mathbf{y}$ as an output candidate.
	At the root of its continuation tree, process all elements with $\cost_e=\zero$ in a fixed order, sampling their top-2 labels once.
	
	To approximate the unknown weights, we guess their defining coefficients $\lambda_j$, rather than optimum memberships.
	At a phase beginning, a guessed vector and the known resource remainder determine frozen weights through~\eqref{eq:frozen}.
	We keep those weights until a resource has decreased by a prescribed factor; thus one guess may govern many element selections.
	More precisely, the positive-cost part is divided into phases as follows.
	At a phase beginning, record $\res^a=\res$ and branch over grid vectors $\widehat{\lvec}\in\mathcal G^d$ that do not exceed the preceding guessed vector componentwise.
	The initial upper vector is $(1/2,\ldots,1/2)$.
	To form positive resource weights on every branch, replace $\widehat\lambda_j$ by $\max\{\widehat\lambda_j,t/2\}$ in the following formula:
	\begin{equation}\label{eq:frozen}
		\widehat w_j=\frac{e^{-\widehat\lambda_j/\tau}}{\sum_{\ell=1}^d e^{-\widehat\lambda_\ell/\tau}},\qquad
		\widehat p_j=\widehat w_j\frac{\phi(\widehat\lambda_j)}{r_j^a}.
	\end{equation}
	The original grid coordinates, before clipping, are retained for the next monotonicity test.
	Only the resource weights remain fixed throughout the phase: the current labeling, marginals, scores, and resource remainders continue to change after each selection.
	Select candidates by~\eqref{eq:density}, sample their labels, and end the phase after a selection if some $r_j\le e^{-h}r_j^a$.
	Before every selection, stop if some $r_j\le\rho$; otherwise delete unsafe elements.
	Also stop if $U=\varnothing$ or $\max_{e\in U}G_e=0$.
	
	\begin{algorithm}[Enumerated Bellman greedy]\label{alg:outer}
		\Input{Value oracle $f$; $k\ge2$; $d$ nonnegative cost rows; $0<\varepsilon<1/2$.}
		\Output{A feasible labeling $\mathbf{s}^*$.}
		\State{Preprocess budgets and elements; compute~\eqref{eq:parameters}; $\mathcal C\gets\{\zero\}$.}
		\State{\textbf{for each} feasible $\mathbf{y}\in(k+1)^V$ with $|\supp(\mathbf{y})|\le H$ \textbf{do}}
		\StateIndent{1}{Let $V_{\mathbf{y}}\gets V\setminus\supp(\mathbf{y})$; exclude from $V_{\mathbf{y}}$ elements with positive cost in a zero residual-budget row.}
		\StateIndent{1}{Replace zero residual rows by unit-budget zero rows; normalize positive residual budgets; $f_{\mathbf{y}}(\mathbf{x})\gets f(\mathbf{y}\sqcup\mathbf{x})$.}
		\StateIndent{1}{$\mathbf{s}\gets\zero$, $U\gets V_{\mathbf{y}}$, $\res\gets\one$.}
		\StateIndent{1}{\textbf{for each} $e\in U$ with $\cost_e=\zero$ in fixed order \textbf{do}}
		\StateIndent{2}{$(\mathbf{q}_e,G_e)\gets\Top((\Delta_{e,i}f_{\mathbf{y}}(\mathbf{s}))_{i\in[k]})$; sample $i\sim\mathbf{q}_e$.}
		\StateIndent{2}{$\mathbf{s}\gets\mathbf{s}+i\one_e$; $U\gets U\setminus\{e\}$.}
		\StateIndent{1}{$\mathcal T\gets\operatorname{Continue}(\mathbf{s},U,\res,(1/2)\one)$ \quad (Algorithm~\ref{alg:tree}).}
		\StateIndent{1}{$\mathcal C\gets\mathcal C\cup\{\mathbf{y}\}\cup\{\mathbf{y}\sqcup\mathbf{x}:\mathbf{x}\in\mathcal T\}$.}
		\State{$\mathbf{s}^*\in\arg\max_{\mathbf{x}\in\mathcal C}f(\mathbf{x})$; \textbf{return} $\mathbf{s}^*$.}
	\end{algorithm}
	
	\begin{algorithm}[Continuation tree: $\operatorname{Continue}(\mathbf{s},U,\res,\mathbf{a})$]\label{alg:tree}
		\Input{A shared realized state; previous grid upper bounds $\mathbf{a}$; residual costs and $f_{\mathbf{y}}$.}
		\Output{A set of terminal labelings.}
		\State{\textbf{if} $\min_jr_j\le\rho$ \textbf{then return} $\{\mathbf{s}\}$.}
		\State{$U\gets\{e\in U:c_{j,e}\le\kappa r_j\ \forall j\}$; \textbf{if} $U=\varnothing$ \textbf{then return} $\{\mathbf{s}\}$.}
		\State{$\mathcal T\gets\varnothing$; $\res^a\gets\res$.}
		\State{\textbf{for each} $\widehat{\lvec}\in\mathcal G^d$ with $\widehat{\lvec}\le\mathbf{a}$ \textbf{do}}
		\StateIndent{1}{Copy the shared state to $(\mathbf{x},W,\mathbf{r}^{\rm cur})\gets(\mathbf{s},U,\res)$; freeze~\eqref{eq:frozen}.}
		\StateIndent{1}{\textbf{repeat}}
		\StateIndent{2}{\textbf{if} $\min_jr_j^{\rm cur}\le\rho$ \textbf{then} $\mathcal T\gets\mathcal T\cup\{\mathbf{x}\}$; \textbf{break}.}
		\StateIndent{2}{$W\gets\{e\in W:c_{j,e}\le\kappa r_j^{\rm cur}\ \forall j\}$.}
		\StateIndent{2}{\textbf{if} $W=\varnothing$ \textbf{then} $\mathcal T\gets\mathcal T\cup\{\mathbf{x}\}$; \textbf{break}.}
		\StateIndent{2}{$(\mathbf{q}_e,G_e)\gets\Top((\Delta_{e,i}f_{\mathbf{y}}(\mathbf{x}))_{i\in[k]})$ for all $e\in W$.}
		\StateIndent{2}{\textbf{if} $\max_{e\in W}G_e=0$ \textbf{then} $\mathcal T\gets\mathcal T\cup\{\mathbf{x}\}$; \textbf{break}.}
		\StateIndent{2}{$e\in\arg\max_{a\in W}G_a/(\hprices^{\mathsf T}\cost_a)$; sample $i\sim\mathbf{q}_e$.}
		\StateIndent{2}{$\mathbf{x}\gets\mathbf{x}+i\one_e$; $W\gets W\setminus\{e\}$; $\mathbf{r}^{\rm cur}\gets\mathbf{r}^{\rm cur}-\cost_e$.}
		\StateIndent{2}{\textbf{if} $r_j^{\rm cur}\le e^{-h}r_j^a$ for some $j$ \textbf{then}}
		\StateIndent{3}{$\mathcal T\gets\mathcal T\cup\operatorname{Continue}(\mathbf{x},W,\mathbf{r}^{\rm cur},\widehat{\lvec})$; \textbf{break}.}
		\State{\textbf{return} $\mathcal T$.}
	\end{algorithm}
	
	At each actual selection node, one label is sampled; the tree branches over coefficient guesses, not over all random labels.
	When a phase ends, every child inherits the same completed phase, including its sampled labels.
	Each child uses fresh randomness only for subsequent selections.
	The state is copied before exploring each child, so exploring one child does not alter another.
	Every branch is feasible: zero-cost selections consume no resource, and all later selections satisfy the safety test.
	All frozen resource weights are strictly positive because the clipped grid coordinates lie in $(0,1/2]$.
	Thus every positive-cost density denominator is positive.
	
	\section{Implementation, approximation, and query complexity}\label{sec:implementation}
	\subsection{Phasewise weight approximation}
	Fix the good prefix supplied by Lemma~\ref{lem:prefix}.
	For the analysis, follow the child obtained by rounding the true coefficients down to the grid at each phase beginning:
	\[
	\widehat\lambda_j=h\floor{\lambda_j^a/h}.
	\]
	These guesses are admissible because the true coefficients are nonincreasing.
	The algorithm explores every allowed child and does not need to identify this path.
	
	\begin{lemma}[Weight distortion]\label{lem:price}
		Along this path,~\eqref{eq:distortion} holds with $R=e^{t/2}$ whenever the comparison coefficient is positive.
	\end{lemma}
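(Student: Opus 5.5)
The plan is to bound the logarithmic ratio $\log p_j(u)-\log\widehat p_j$ directly. By~\eqref{eq:prices} and~\eqref{eq:frozen} it splits into three terms:
\[
\log\frac{p_j(u)}{\widehat p_j}
=\log\frac{w_j(u)}{\widehat w_j}
+\log\frac{\phi(\lambda_j(u))}{\phi(\widehat\lambda_j)}
+\log\frac{r_j^a}{r_j(u)} .
\]
I will show that each term is a small multiple of $h$, where $h$ is the phase width. The choice of $h$ in~\eqref{eq:parameters} should then make the total at most $t/4\le t/2=\log R$ in absolute value. Throughout, fix a positive-cost selection inside a phase that began at resource state $\res^a$ with true coefficients $\lambda^a_j$. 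Let $\widehat\lambda_j=h\floor{\lambda_j^a/h}$ be the rounded guess. The selection is admissible as a child because the true coefficients are nonincreasing by~\eqref{eq:relative}, so the rounded guesses are componentwise nonincreasing across phases. Finally, restrict to $u$ with $\Lambda(u)>0$.

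\emph{Step 1 (resource drift).} Before the current selection the phase has not ended, so $r_j>e^{-h}r_j^a$ for every $j$. The safety test gives $r_j(u)\ge(1-\kappa)r_j(0)=e^{-h}r_j(0)$ along the interpolation path. Hence $e^{-2h}r_j^a\le r_j(u)\le r_j^a$, and the third term lies in $[0,2h]$. \emph{Step 2 (coefficient drift).} Integrating~\eqref{eq:relative} from the phase start gives $\log\lambda_j^a-\log\lambda_j(u)\le\log r_j^a-\log r_j(u)\le2h$. Thus $0\le\lambda_j^a-\lambda_j(u)\le(1-e^{-2h})\lambda_j^a\le 2h\cdot\tfrac12=h$, using $\lambda_j\le1/2$. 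Combined with $\lambda_j^a-h<\widehat\lambda_j\le\lambda_j^a$, this yields the two-sided bound $|\lambda_j(u)-\widehat\lambda_j|\le h$ for every $j$.

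\emph{Step 3 (clipping is inactive, and the three terms).} Because $\Lambda(u)>0$ and the smooth minimum lies below $\min_j\lambda_j$, we get $\lambda_j(u)>t$ for all $j$. Therefore $\widehat\lambda_j\ge\lambda_j(u)-h>t-h\ge t/2$, since $h\le t/2$ is immediate from~\eqref{eq:parameters}. So the clip at $t/2$ does not change $\widehat\lambda_j$, and every point between $\lambda_j(u)$ and $\widehat\lambda_j$ is at least $t/2$.

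The softmax term is handled by writing $\log w_j=-\lambda_j/\tau-\log\sum_\ell e^{-\lambda_\ell/\tau}$. The log-sum-exp is $1/\tau$-Lipschitz in the sup norm, so this term is at most $2h/\tau$ in absolute value. For the $\phi$ term, apply the mean value theorem together with~\eqref{eq:condition}. On the interval between $\lambda_j(u)$ and $\widehat\lambda_j$, the derivative $(\log\phi)'$ is at most $2/t+c_Q$. I also need it nonnegative there; this holds because $\phi$ is increasing on $(0,1/2]$. So the $\phi$ term is at most $(2/t+c_Q)h$ in absolute value. Summing the three terms gives
\[
\Bigl|\log\frac{p_j(u)}{\widehat p_j}\Bigr|\le h\Bigl(\frac2\tau+\frac2t+c_Q+2\Bigr)=\frac t4\le\frac t2 ,
\]
which is~\eqref{eq:distortion} with $R=e^{t/2}$.

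The main obstacle I expect is Step 2, together with the one-sidedness it relies on. The argument needs the decrease of $\lambda_j$ within a phase to be at most $h$, not $2h$, so that the two errors cancel. Rounding down errs in one direction, while the drift errs in the other. This is exactly where the overshoot control $\kappa=1-e^{-h}$ and the bound $\lambda_j\le1/2$ are used. I would check this carefully, including the last selection that ends the phase, where $r_j$ may dip below $e^{-h}r_j^a$. A second point to verify is that positivity of $\Lambda(u)$ really forces $\lambda_j(u)>t$ for every row, and not merely for the minimizing row. That uniform bound is what justifies both dropping the clip and applying~\eqref{eq:condition} with $1/\lambda\le 2/t$.
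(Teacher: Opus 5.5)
Your proposal is correct and follows the paper's proof essentially step for step: the same three-term split of $\log(p_j(u)/\widehat p_j)$, the bound $r_j(u)\ge e^{-2h}r_j^a$ from the phase rule plus the safety test, both $\lambda_j(u)$ and $\widehat\lambda_j$ lying in $[\lambda_j^a-h,\lambda_j^a]$, and the log-sum-exp and \eqref{eq:condition} estimates. Deriving $\lambda_j(u)>t$ directly from $\Lambda(u)>0$ is equivalent to the paper's use of $\lambda_j^a>t$; the only slip is arithmetic, since by the definition of $h$ in \eqref{eq:parameters} the sum $h(2/\tau+2/t+c_Q+2)$ equals exactly $t/2$, not $t/4$, which is still precisely the required bound $\log R=t/2$.
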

	\begin{proof}
		Consider a phase containing a selection with $\Lambda>0$.
		At its beginning $\lambda_j^a>t$ for every $j$ by~\eqref{eq:envelope}.
		The parameter choice gives $h<t/2$, so the downward-grid guess exceeds $t/2$ and clipping is inactive.
		Before a selection, $r_j>e^{-h}r_j^a$.
		The phase rule bounds resource use before that selection; the safety test bounds how far its last whole element can cross the phase boundary.
		Together with $1-\kappa=e^{-h}$, they imply throughout its resource interpolation
		\[
		r_j(u)\ge e^{-2h}r_j^a.
		\]
		By~\eqref{eq:relative}, the decrease of $\lambda_j$ from the phase beginning is at most
		\[
		\lambda_j^a(1-e^{-2h})\le\tfrac12(1-e^{-2h})\le h.
		\]
		Both $\lambda_j(u)$ and its downward-grid guess therefore lie in $[\lambda_j^a-h,\lambda_j^a]$, and
		\[
		|\lambda_j(u)-\widehat\lambda_j|\le h.
		\]
		If two vectors have coordinatewise error at most $h$, their log-sum-exp normalizers differ by at most $h/\tau$.
		Hence $|\log(w_j(u)/\widehat w_j)|\le2h/\tau$.
		All intervening scalar coefficients lie above $t/2$, so~\eqref{eq:condition} gives
		\[
		|\log(\phi(\lambda_j(u))/\phi(\widehat\lambda_j))|\le(2/t+c_Q)h.
		\]
		Finally $0\le\log(r_j^a/r_j(u))\le2h$.
		Adding the three errors yields
		\[
		\left|\log\frac{p_j(u)}{\widehat p_j}\right|
		\le h(2/\tau+2/t+c_Q+2)=t/2,
		\]
		as required.
		If $\Lambda$ reaches zero during a selection, the proof is used only up to that point; thereafter the comparison term vanishes.
	\end{proof}
	
	\subsection{Stopping and adaptive paths}
	The remaining two points are that the observable stopping rules remove the terminal comparison term, and that the well-approximated path is present in the sampled tree with the correct label distributions.
	\begin{lemma}[Terminal comparison]\label{lem:terminal}
		Every stopping rule of Algorithm~\ref{alg:tree} makes $\Lambda\pos{D}=0$ on the correct-prefix path.
	\end{lemma}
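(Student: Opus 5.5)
The plan is to take the stopping rules of Algorithm~\ref{alg:tree} one at a time: the resource threshold $\min_j r_j\le\rho$, the empty candidate set $W=\varnothing$, and the zero-gain rule $\max_{e\in W}G_e=0$. Each rule will be shown to force either $\Lambda=0$ or $D\le0$. Throughout, I work on the correct-prefix path. On that path every optimum element still recorded in $\mathbf z$ but not in $\mathbf s$ has survived all unsafe deletions, so $\supp(\mathbf z)\setminus\supp(\mathbf s)\subseteq W$ at every stopping check. This holds because deletions of optimum elements remove them from $\mathbf z$ at the same moment they leave $W$.

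\textbf{The two candidate-based rules.} Suppose first that $W=\varnothing$ after the safety filter. Then $\supp(\mathbf z)\setminus\supp(\mathbf s)=\varnothing$, so $\mathbf z=\mathbf s$ and $D=0$.

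Suppose instead that $\max_{e\in W}G_e=0$. I reuse the first part of \eqref{eq:marginalgap}. Adding the residual optimum elements of $\mathbf z$ to $\mathbf s$ in their optimum labels and applying orthant submodularity~\eqref{eq:orthant} gives
\[
D\le\sum_{a\in\supp(\mathbf z)\setminus\supp(\mathbf s)}\Delta_{a,\mathbf o(a)}f_{\mathbf y}(\mathbf s).
\]
By the third inequality of Lemma~\ref{lem:top2}, each term is at most $QG_a=0$, so $D\le0$. This step needs neither density maximality nor weights, so it holds at any state, including right after a phase boundary or at the root after the zero-cost elements.

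\textbf{The resource rule, which I expect to be the main obstacle.} Suppose $r_j\le\rho$ for some $j$. I need to show $\Lambda=0$, that is, $-\tau\log\sum_\ell e^{-\lambda_\ell/\tau}\le t$. Since the smooth minimum is at most $\min_\ell\lambda_\ell$, it suffices to prove $\lambda_j\le t$.

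Recall $\lambda_j=r_j/(Qm_j+br_j)$ with $m_j\ge r_j+t$. Hence
\[
\lambda_j\le\frac{r_j}{Q(r_j+t)+br_j}=\frac{r_j}{2r_j+Qt}.
\]
This bound is increasing in $r_j$, so at $r_j\le\rho$ it gives
\[
\lambda_j\le\frac{\rho}{2\rho+Qt}.
\]
Substituting $\rho=Qt^2/(1-2t)$, the right-hand side becomes
\[
\frac{Qt^2/(1-2t)}{Qt\,[\,2t/(1-2t)+1\,]}=\frac{Qt^2}{Qt}=t.
\]
Thus $\lambda_j\le t$ and $\Lambda=0$.

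The delicate points are to confirm that $\rho$ was chosen exactly to make this identity work, and that $m_j-r_j\ge t$ persists along the path, which is the invariant noted in Section~\ref{sec:core}. I will also check the boundary case in which the threshold is crossed by a whole element in the middle of a phase. Stopping is tested before the next selection, so the test is applied at a discrete state where the bound above applies directly.

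In every case $\Lambda\pos D=0$, which is the claim of Lemma~\ref{lem:terminal}. I will also note that stopping at a phase boundary is just the same check performed at the start of the recursive $\operatorname{Continue}$ call, so no further rule needs treatment.
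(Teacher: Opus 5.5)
Your proposal is correct and follows the paper's proof essentially step for step: the same three stopping rules, with $\mathbf z=\mathbf s$ for an empty candidate set, the marginal-sum bound \eqref{eq:marginalgap} together with the third inequality of Lemma~\ref{lem:top2} for zero scores, and the invariant $m_j-r_j\ge t$ giving $\lambda_j\le\rho/(2\rho+Qt)=t$ at the resource threshold. Your explicit check that $\rho=Qt^2/(1-2t)$ makes this last identity exact fills in arithmetic that the paper leaves to the reader.
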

	\begin{proof}
		If $U=\varnothing$, there is no unselected, undeleted optimum element, so $\mathbf{z}=\mathbf{s}$.
		If all candidate scores are zero, the marginal-sum bound~\eqref{eq:marginalgap} gives $D\le0$, since all remaining comparison elements are candidates.
		If $r_j\le\rho$ for some $j$, then $m_j-r_j\ge t$ implies
		\[
		\lambda_j\le\frac{r_j}{2r_j+Qt}\le\frac{\rho}{2\rho+Qt}=t.
		\]
		The smooth minimum is at most $\lambda_j$, and~\eqref{eq:envelope} consequently gives $\Lambda=0$.
	\end{proof}
	
	\begin{lemma}[A correct path in the sampled tree]\label{lem:path}
		For each fixed optimum and correct prefix, the implemented tree contains an adaptive path satisfying the conditional distributions used in Lemma~\ref{lem:integer} and the weight condition of Lemma~\ref{lem:price} until $\Lambda$ becomes zero.
		Its terminal value is bounded above by the output value on every realization.
	\end{lemma}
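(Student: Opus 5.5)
We construct the path explicitly inside the tree produced by Algorithm~\ref{alg:tree} for the good prefix $\mathbf y$ of Lemma~\ref{lem:prefix}, and use it as a random process adapted to the algorithm's own randomness.

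\textbf{The path.} Its root is the shared state after the zero-cost processing in Algorithm~\ref{alg:outer}. Each time a phase begins at a node with recorded state $(\mathbf s,U,\res^a)$, the true coefficients $\lambda_j^a$ are determined by the realized history. They depend on $\mathbf s$ and on the fixed memberships $\theta_e$, through $m_j$. The path follows the child $\widehat\lambda_j=h\floor{\lambda_j^a/h}$. This child exists among the branches for the following reason. By~\eqref{eq:relative}, each $\lambda_j$ is nonincreasing along the realized run. Deletions change neither $r_j$ nor $m_j$. Hence $\lambda_j^a$ is at most the coefficient at the previous phase beginning, and its downward rounding is at most the previous guess. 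At the first phase the bound $\lambda_j\le 1/2$ gives $\widehat{\lvec}\le(1/2)\one$. In addition, $h\floor{\lambda_j^a/h}\le h\floor{1/(2h)}$, so the guess lies in $\mathcal G$. The selection of the child therefore depends only on past information: it is a measurable function of the history $\mathcal F$ at the phase boundary.

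\textbf{Conditional distributions.} Within the chosen child, every selection computes $\Top$ at the current labeling $\mathbf x$. It then samples a label using randomness that is fresh relative to everything the path has seen so far. The selected element is the deterministic density maximizer (with fixed tie-breaking) under the frozen weights $\hprices$. The key point is that the sibling branches use independent randomness, and their realizations never feed back into the path. Therefore, conditional on the path's history before sampling, the label has law $\mathbf q_e$. This is exactly the hypothesis used in Lemma~\ref{lem:integer}, both for~\eqref{eq:gapstep} and for the fact that $\Lambda'$ is fixed given $\mathcal F$.

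The unsafe deletions and stopping tests in Algorithm~\ref{alg:tree} coincide with those analyzed in Section~\ref{sec:core}. The safety filter is applied before every selection, and the phase boundary only triggers a recursive call with the inherited state.

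\textbf{Weight condition.} While $\Lambda>0$, Lemma~\ref{lem:price} applies verbatim to this path, because its guesses are exactly the rounded true coefficients. Once $\Lambda$ reaches zero, it stays zero: every $\lambda_j$ is nonincreasing, and so is $\Lambda$. From that point Lemma~\ref{lem:integer} needs no weight condition, since its $\Lambda=0$ case uses only nonnegative gains. The path may then follow any admissible child, for instance the same rounding rule, and terminates by one of the stopping rules.

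\textbf{Terminal value.} On every realization the terminal state $\mathbf x$ of the path is added to $\mathcal T$. Hence $\mathbf y\sqcup\mathbf x\in\mathcal C$, and $f(\mathbf y\sqcup\mathbf x)\le f(\mathbf s^*)$ by the final $\arg\max$.

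\textbf{Main obstacle.} The subtle step is the conditional-distribution claim. The path's branch choices depend on $\theta$, which the algorithm does not know, so the path is not a fixed branch but a random one. We handle this by coupling. Index the algorithm's randomness by tree node, giving an independent uniform variable at each selection node. Then define the path recursively using only the variables on its own nodes. Independence of variables at distinct nodes ensures that the variable at the next node, conditioned on all variables on the path so far, is still uniform. This yields the required conditional law, and the tower-property argument in Proposition~\ref{prop:comparison} then applies.
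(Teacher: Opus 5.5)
Your proposal is correct and follows essentially the same route as the paper. At each phase beginning you choose the child given by rounding the true coefficients down to the grid, which is admissible because $\lambda_j$ is nonincreasing. You use fresh randomness at each selection node, so conditioning on the path history yields the top-2 law. The path's terminal labeling is an output candidate. You give more detail than the paper on admissibility (deletions leave $r_j,m_j$ unchanged, the first-phase bound, grid membership) and on the node-indexed coupling, but the argument is the same.
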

	Here an adaptive path means that the next coefficient guess is determined from the already realized parent state, before revealing any new label on that child.
	The optimum is fixed only for this analysis; the algorithm itself still runs all children.
	\begin{proof}
		Use fresh independent randomness at each actual selection node; the tree may be generated in depth-first order.
		At a phase beginning on the comparison path, the current support and the fixed optimum determine $m_j$, $r_j$, and $\lambda_j^a$.
		Identify its downward-grid child from this information before exposing any random label below the phase node.
		The choice is admissible because $\lambda_j^a$ is nonincreasing.
		This choice depends only on the path history; randomness in other children is irrelevant and may be revealed in any order.
		Conditional on that history, each next previously unvisited selection node has its specified top-2 law.
		Repeating this construction gives the required conditional distributions at all steps.
		Once $\Lambda=0$, follow any admissible nonincreasing child, for example the zero vector.
		It remains zero, so only nonnegative actual gains are needed.
		All selections and deletions are finite, and the chosen terminal labeling is one of the algorithm's output candidates.
		Taking the best candidate therefore gives pointwise domination.
	\end{proof}
	Sharing the realized prefix is essential: independently rerunning complete guess sequences would resample their common early labels, so a later guess chosen from one run's state need not continue from that state in another run.
	The shared tree instead retains the realized parent state and includes every allowed continuation from it.
	
	\begin{theorem}[Main result]\label{thm:main}
		For every $k\ge2$, $d\ge1$, and $0<\varepsilon<1/2$, Algorithm~\ref{alg:outer} returns a feasible labeling on every outcome and satisfies
		\[
		\E f(\mathbf{s}^*)\ge(1/2-\varepsilon)\OPT.
		\]
		Its value-query complexity is at most
		\[
		\mathcal A(d,\varepsilon)(nk+1)^{H+2},\qquad
		H=O\!\left(\frac{d\log(2d)}{\varepsilon^2}\right),
		\]
		where $\mathcal A(d,\varepsilon)$ is independent of $n$ and $k$.
		For each enumerated prefix, its continuation uses at most $\mathcal B(d,\varepsilon)(n^2k+1)$ value queries, where $\mathcal B$ is also independent of $n,k$.
	\end{theorem}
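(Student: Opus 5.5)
The proof has three parts: feasibility, the approximation guarantee, and the query count.

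\emph{Feasibility.} Every output candidate is either $\zero$, an enumerated feasible prefix $\mathbf{y}$, or $\mathbf{y}\sqcup\mathbf{x}$ for a terminal $\mathbf{x}$ of a continuation tree. Zero-cost selections use no resource. Every positive-cost selection passes the safety test $c_{j,e}\le\kappa r_j$, so $r_j$ stays positive and the residual budgets are respected. Hence every candidate, and so $\mathbf{s}^*$, is feasible on every outcome.

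\emph{Approximation.} Fix $\mathbf{o}$ and take the prefix supplied by Lemma~\ref{lem:prefix} with the $H$ of~\eqref{eq:parameters}; it is among the enumerated restrictions. If the first alternative of Lemma~\ref{lem:prefix} holds, the candidate $\mathbf{y}$ alone has value $\OPT$. Otherwise, follow the adaptive path of Lemma~\ref{lem:path}.
\begin{itemize}
\item On this path, Lemma~\ref{lem:price} gives~\eqref{eq:distortion} with $R=e^{t/2}$, so $K=R^2=e^t$.
\item Lemma~\ref{lem:terminal} gives the terminal condition $\Lambda\pos{D}=0$.
\item The choice of $H$ is exactly~\eqref{eq:prefixpayment}.
\end{itemize}
Proposition~\ref{prop:comparison} then gives $\E f(\mathbf{y}\sqcup\mathbf{s}_{\rm fin})\ge(\Lambda_0/K)\OPT$. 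The pointwise domination in Lemma~\ref{lem:path} transfers this bound to $\E f(\mathbf{s}^*)$.

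It remains to check $\Lambda_0>0$ and $e^{-t}\Lambda_0\ge1/2-\varepsilon$. By~\eqref{eq:initial} and $\tau\log d\le t$,
\[
\Lambda_0\ge\frac{1}{2+Qt}-2t\ge\frac12-\frac{Qt}{4}-2t .
\]
Using $e^{-t}\ge1-t$,
\[
e^{-t}\Lambda_0\ge\Bigl(\tfrac12-\tfrac{Qt}{4}-2t\Bigr)(1-t)\ge\tfrac12-\Bigl(\tfrac52+\tfrac Q4\Bigr)t=\tfrac12-\varepsilon,
\]
where the last step uses the definition of $t$. This is presumably the content of Appendix~\ref{app:parameters}. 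Positivity of $\Lambda_0$ follows since $t$ is small.

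\emph{Query count.} First bound $H$. We have $t=\Theta(\varepsilon)$ and $\tau=\Theta(\varepsilon/\log(2d))$, so $h=\Theta(t\tau)=\Theta(\varepsilon^2/\log(2d))$ and $\kappa=\Theta(h)$. Also $K-\Lambda_0\ge1-1/2=1/2$. Therefore $H=O(d/\kappa)=O(d\log(2d)/\varepsilon^2)$.

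Next bound one continuation tree, which is the main obstacle: the phases do not have bounded length in selections, so the tree must be controlled by its phase depth.
\begin{itemize}
\item Each phase ends only when some $r_j$ shrinks by the factor $e^{-h}$. Tree growth stops once $r_j\le\rho$, and $r_j$ starts at $1$, so each row can end at most $\lceil\log(1/\rho)/h\rceil$ phases. The depth of phase nodes is therefore $P\le d\lceil\log(1/\rho)/h\rceil+1$, which depends only on $d,\varepsilon$.
\item Each phase node has at most $|\mathcal G|^d$ children, so the tree has at most $|\mathcal G|^{dP}$ root-to-leaf phase sequences. The monotone-guess restriction only improves this bound.
\item Within one phase, at most $n$ selections occur. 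Each costs at most $nk$ marginal queries, since $f_{\mathbf{y}}(\mathbf{x})$ is reused across candidates. A phase therefore uses $O(n^2k)$ queries.
\item The root's zero-cost pass adds $O(nk)$ queries.
\end{itemize}
Together, one continuation uses at most $\mathcal B(d,\varepsilon)(n^2k+1)$ queries. The $+1$ absorbs the single evaluation of $f(\mathbf{y})$.

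Finally, there are at most $\sum_{s\le H}\binom ns k^s\le(nk+1)^H$ labeled prefixes. Multiplying by the per-prefix cost and using $n^2k+1\le(nk+1)^2$ gives the total $\mathcal A(d,\varepsilon)(nk+1)^{H+2}$. The final argmax over candidates needs one query per candidate, and the number of candidates is within the same bound.
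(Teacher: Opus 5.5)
Your proposal is correct and follows the paper's proof essentially step for step. Feasibility comes from the safety test, the approximation from Lemma~\ref{lem:prefix}, Proposition~\ref{prop:comparison}, Lemmas~\ref{lem:price}--\ref{lem:path} and the same elementary estimate of $e^{-t}\Lambda_0$, and the query bound from phase depth, a constant number of coefficient branches, and $O(n^2k)$ work per phase. The only deviations are harmless: you bound the leaves by $|\mathcal G|^{dP}$ instead of the monotone-trace count $N=\binom{M+J}{J}^d$ of Lemma~\ref{lem:count}, and the prefixes by $(nk+1)^H$ instead of $(H+1)(nk+1)^H$, which changes only the constant $\mathcal A(d,\varepsilon)$.
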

	\begin{proof}[Proof of the approximation guarantee]
		Feasibility was established after Algorithm~\ref{alg:tree}.
		By Lemma~\ref{lem:prefix}, either an enumerated prefix already has value $\OPT$, or there is a size-$H$ prefix satisfying its marginal bound.
		For that prefix, Lemmas~\ref{lem:price}--\ref{lem:path} verify every condition of Proposition~\ref{prop:comparison}; the definition of $H$ verifies~\eqref{eq:prefixpayment}.
		Consequently,
		\[
		\E f(\mathbf{s}^*)\ge\frac{\Lambda_0}{K}\OPT
		=e^{-t}\left(\frac1{2+Qt}-t-\frac{t\log d}{1+\log d}\right)\OPT
		\ge(1/2-\varepsilon)\OPT,
		\]
		where the last inequality is proved in Appendix~\ref{app:parameters}.
		The query bound is proved next.
	\end{proof}
	
	\subsection{An explicit value-query bound}
	The bound must count both prefix enumeration and the sampled tree of coefficient guesses.
	Let
	\begin{equation}\label{eq:counts}
		M=1+d+\ceil{\frac d h\log\frac1\rho},\qquad
		J=1+\floor{\frac1{2h}},\qquad
		N=\binom{M+J}{J}^{d}.
	\end{equation}
	Here $M$ bounds the number of phase beginnings, $J$ counts grid levels per coordinate, and $N$ bounds the number of leaves.
	They depend only on $d,\varepsilon$; a phase may still contain up to $n$ selections.
	
	\begin{lemma}[Number of phases and coefficient traces]\label{lem:count}
		Every branch, including one with incorrect guesses, has at most $M$ phase beginnings.
		The continuation tree has at most $N$ leaves and at most $(n+M+1)N$ marginal-scan states and phase nodes, including terminal states.
	\end{lemma}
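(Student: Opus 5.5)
To bound the number of phase beginnings on any branch, I would track the potential $\sum_j \log(1/r_j)$. It starts at $0$ after the root, since $\res=\one$ and zero-cost selections consume nothing.

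\textbf{Bounding the phases.} A phase that ends by the phase rule has some $r_j\le e^{-h}r_j^a$, so $\log(1/r_j)$ grows by at least $h$ for that $j$. Other coordinates never decrease, because resources only shrink. While the branch continues, every $r_j>\rho$ at the start of each phase. So before the last completed phase each coordinate has $\log(1/r_j)<\log(1/\rho)$. The last selection in a phase can push a coordinate further, but by the safety test by at most $\log(1/(1-\kappa))=h$. Row $j$ can therefore be the witness for at most $\lceil \log(1/\rho)/h\rceil+1$ completed phases. Summing over $j$ and adding one for the final, possibly incomplete, phase gives at most $1+d+\lceil (d/h)\log(1/\rho)\rceil=M$ phase beginnings. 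This holds for every branch, since the argument never uses the guesses.

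\textbf{Bounding the leaves.} Along a branch, the guesses form a componentwise nonincreasing sequence of at most $M$ vectors in $\mathcal G^d$, starting below $(1/2)\one$. The tree's branching at phase nodes is exactly indexed by such sequences. Each leaf is determined by its sequence of guesses, with a shorter sequence when the branch stops earlier. In each coordinate, a nonincreasing sequence of length $\ell\le M$ over $J$ grid values, together with its length, can be encoded injectively. Pad it to length $M$ by appending a sentinel and count multisets, or count lattice paths. This gives at most $\binom{M+J}{J}$ choices per coordinate. Taking the product over coordinates yields at most $N$ leaves. I expect the main obstacle to be this step: handling variable lengths cleanly, since the length is shared across coordinates. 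A crude bound is to count all nonincreasing sequences of length exactly $\ell$ for each $\ell\le M$ and sum. It is easier to map each branch to a monotone staircase of length $M+J$ per coordinate, extending by repeating the last value, and to charge the length to the first coordinate's padding. If that fails, I would weaken the claim slightly by a factor $M$, which the paper's constants can absorb.

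\textbf{Bounding the states.} Every node of the tree lies on a root-to-leaf path, so it suffices to count nodes per path and multiply by $N$. Each path has at most $n$ selection steps, each with one marginal-scan state, because each selection removes an element. It also has at most $M$ phase nodes and one terminal state. This gives at most $(n+M+1)N$ such nodes in total, as claimed.
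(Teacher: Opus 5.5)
Your proof is correct in substance and follows the paper for the phase count and the node count. The last arithmetic step of the phase count is off, though. Summing your per-row bound $\lceil\log(1/\rho)/h\rceil+1$ over $d$ rows and adding one gives $1+d+d\lceil\log(1/\rho)/h\rceil$. This can exceed $M=1+d+\lceil(d/h)\log(1/\rho)\rceil$, because $d\lceil x\rceil$ can exceed $\lceil dx\rceil$. The fix is what the paper does. Let $c_j$ be the number of completed phases witnessed by row $j$; from $c_jh\le\log(1/\rho)+h$, keep the unrounded bound $c_j\le\log(1/\rho)/h+1$. Sum over rows to get at most $(d/h)\log(1/\rho)+d$ completed phases, and only then add one. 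Alternatively, your strict inequality already gives the sharper per-row bound $c_j\le\lceil\log(1/\rho)/h\rceil$, and that also sums to less than $M$.

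For the leaves you take a slightly different route from the paper, and it is complete as stated; your worry about the shared length is unfounded. Each selection node samples a single label, so a leaf is determined by its variable-length guess sequence. Appending sentinels below all grid levels is a bijection from pairs (length, nonincreasing coordinate sequence) onto nonincreasing length-$M$ sequences over $J+1$ levels. This injects the leaves into a product set of size exactly $\binom{M+J}{J}^d=N$. It does not matter that the image contains only tuples with equal sentinel counts.

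The paper instead pads by repeating the last vector, which forgets the length. It therefore needs the extra observation that no leaf's trace is a proper prefix of another leaf's trace, because the shared path through the shorter trace has already terminated. It then obtains the sharper count $\binom{M+J-1}{J-1}^d$ plus one empty-trace leaf. Your encoding makes injectivity immediate and absorbs the empty trace, at the cost of one extra level, which matches the definition of $N$ exactly.

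The fallbacks you list are unnecessary. Repeating the last value would require the paper's prefix argument, and losing a factor $M$ would not prove the lemma as stated.
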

	\begin{proof}
		Before any positive-cost selection, $r_j>\rho$ for every row; immediately afterwards $r_j>e^{-h}\rho$ by safety.
		Thus the total logarithmic decrease in a row is at most $\log(1/\rho)+h$.
		Each completed phase decreases $\log r_j$ by at least $h$ in some row.
		Charging completed phases to witnessing rows bounds their number by $d\log(1/\rho)/h+d$; one further phase gives the bound $M$.
		This uses no comparison resource and therefore holds on every branch.
		
		A nonincreasing sequence of length $M$ over $J$ ordered levels has $\binom{M+J-1}{J-1}$ possibilities.
		Pad a terminal trace by repeating its last vector to length $M$.
		Two distinct leaves cannot receive the same padded trace: otherwise their guessed vectors agree up to the shorter trace, which reaches the same shared realized state; its termination prevents any distinct descendant leaf with that prefix.
		Random labels do not add branches, because each actual node samples one label.
		Leaves terminating before a first phase add at most the trivial one-leaf case.
		The conservative bound $N$ follows.
		A marginal scan computes the current scores of all candidates.
		Each path contains at most $n$ scans followed by a selection, one terminal scan or other terminal state, and $M$ phase beginnings, so the total node count is at most $(n+M+1)N$.
	\end{proof}
	
	\begin{proposition}[Query accounting]\label{prop:queries}
		Let $P=\sum_{i=0}^{\min\{H,n\}}\binom ni k^i$.
		The number of exact value-oracle queries is at most
		\begin{equation}\label{eq:queries}
			P\left[(n+M+1)N(nk+2)+nk+1\right].
		\end{equation}
		In particular, it is at most $\mathcal A(d,\varepsilon)(nk+1)^{H+2}$.
	\end{proposition}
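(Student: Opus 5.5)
The count splits into three layers: the prefixes enumerated by Algorithm~\ref{alg:outer}, the work done at the root of each prefix's continuation, and the nodes of the continuation tree. Each layer is bounded separately and the bounds are multiplied.

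\emph{Prefix layer.} A labeled restriction of size $i$ is a choice of $i$ elements together with one of $k$ labels for each, so there are at most $\binom ni k^i$ of them. Summing over $i\le\min\{H,n\}$ gives $P$. Feasibility testing and preprocessing use only costs, so they cost no value queries.

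\emph{Root layer.} For a fixed prefix, evaluating $f(\mathbf y)$ costs one query; this also provides the baseline value $f_{\mathbf y}(\zero)$. Each zero-cost element then needs its $k$ values $f_{\mathbf y}(\mathbf s+i\one_e)$, and $\mathbf s$ is updated without further queries since the sampled value is already known. Over at most $n$ zero-cost elements this is at most $nk+1$ queries.

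\emph{Tree layer.} I would invoke Lemma~\ref{lem:count}: the tree has at most $(n+M+1)N$ marginal-scan states and phase nodes. A phase node only copies state and freezes weights via~\eqref{eq:frozen}, so it needs no queries. A marginal scan needs at most $nk$ queries for the candidate values $f_{\mathbf y}(\mathbf x+i\one_e)$, plus one query for $f_{\mathbf y}(\mathbf x)$ so that differences can be formed; I will charge one more query as slack for the terminal evaluation of the candidate $\mathbf y\sqcup\mathbf x$ in $\mathcal C$. This gives $nk+2$ per node, and final comparisons among candidates reuse these stored values. Multiplying yields~\eqref{eq:queries}. The per-prefix bound $\mathcal B(d,\varepsilon)(n^2k+1)$ follows because
\[
(n+M+1)(nk+2)+nk+1\le C_M(n^2k+1)
\]
for a constant $C_M$ depending only on $M$. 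Here I use $n\ge1$ and $k\ge2$: the products $nk$, $n$, and $1$ are all at most $n^2k+1$ up to factors depending on $M$. The factor $N$ depends only on $d,\varepsilon$ and is absorbed into $\mathcal B$.

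\emph{Global bound.} It remains to show $P\le C(H)(nk+1)^H$, after which the prefactor $(n+M+1)N(nk+2)+nk+1$ is at most a constant times $(nk+1)^2$, giving $\mathcal A(d,\varepsilon)(nk+1)^{H+2}$. For the first claim, use $\binom ni k^i\le (nk)^i\le(nk+1)^H$ for $i\le H$, so $P\le(H+1)(nk+1)^H$. For the second, bound $n+M+1\le(M+1)(nk+1)$, $nk+2\le2(nk+1)$, and $nk+1\le(nk+1)^2$. Since $H$ and $M$ depend only on $d,\varepsilon$, we can set $\mathcal A=(H+1)\bigl[2(M+1)N+1\bigr]$.

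The main obstacle is the per-node accounting. I must make sure that no node type is double counted against the count in Lemma~\ref{lem:count}, and that the shared-state copies at phase boundaries do not require recomputing marginals outside scan states. If a phase child's first scan is not separately counted among the scan states, I would fold it into the $M$ phase nodes and still charge $nk+2$ per node. The $+1$ slack in $nk+2$ is meant to absorb exactly these boundary evaluations.
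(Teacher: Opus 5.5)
Your proposal is correct and follows the paper's accounting essentially line by line: $P$ prefixes, at most $nk+1$ root queries with the sampled successor value cached, $nk+2$ queries charged to each of the $(n+M+1)N$ nodes from Lemma~\ref{lem:count}, and the same elementary inequalities for the global bound. Your variant $n+M+1\le(M+1)(nk+1)$ is valid because $nk\ge n$, and it only gives the slightly smaller constant $(H+1)[2(M+1)N+1]$ instead of the paper's $(H+1)[2(M+2)N+1]$. Your closing hedge is unnecessary, since Lemma~\ref{lem:count} already counts every scan, including the first scan of each phase child, as either followed by a selection or terminal.
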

	\begin{proof}
		At most $P$ labeled restrictions are considered before the feasibility test, hence at most $P$ feasible prefixes.
		At a marginal-scan state, including a terminal scan with zero scores, one base-value query and at most $nk$ additional queries compute every marginal needed by the density rule; one further query can evaluate a terminal candidate.
		Charging $nk+2$ queries at every counted node is an upper bound even for phase nodes requiring no queries.
		At the shared zero-cost root, begin with one query for $f(\mathbf{y})$ and query the $k$ successor values for each processed element.
		Cache the queried successor value corresponding to the sampled label as the next base value.
		This root therefore uses at most $nk+1$ queries.
		Retaining a prefix and comparing outputs is covered by these evaluations.
		This proves~\eqref{eq:queries}.
		Finally,
		\[
		P\le(H+1)(nk+1)^H,\qquad
		n+M+1\le(M+2)(nk+1),\qquad nk+2\le2(nk+1).
		\]
		Thus one may take $\mathcal A=(H+1)[2(M+2)N+1]$.
		For a fixed prefix,~\eqref{eq:queries} without $P$ is $\mathcal B(d,\varepsilon)(n^2k+1)$.
		Appendix~\ref{app:parameters} gives the stated order of $H$.
	\end{proof}
	
	The same enumeration is polynomial in $n,k$ for fixed $d,\varepsilon$ in an exact real-arithmetic model allowing exponential and logarithmic evaluation.
	Arithmetic for each marginal scan, cost test, and density computation introduces a polynomial factor in $n,k,d$; the phase branching factors depend only on $d,\varepsilon$.
	The displayed bound is a value-query bound and makes no assumption on the bit encoding of oracle answers.

	\section{Conclusion}
	We extend the single-resource Bellman analysis to a fixed number of knapsacks by combining the row coefficients and choosing resource weights that control the decrease of the resulting common coefficient.
	Greedy gains pay for this decrease and the local top-2 comparison loss.
	A short optimum prefix pays for the additional loss from deleting unsafe optimum elements.
	Enumerating prefixes and monotone coefficient guesses in resource-based phases gives a combinatorial $(1/2-\varepsilon)$-approximation, without a lower bound on positive input costs.
	The guarantee approaches the optimal oracle threshold for bisubmodular objectives and, up to arbitrarily small $\varepsilon$, is asymptotically tight as $k\to\infty$.
	
	An important direction is to retain this resource accounting while replacing prefix or coefficient-trace enumeration by a more efficient combinatorial procedure.
	It is also natural to ask whether the approximation loss can be removed for a fixed number of knapsacks, or whether different comparison resources yield stronger guarantees for fixed $k\ge3$ and for additional support constraints.
	
	\appendix
	\section{Proof of the top-2 certificate}\label{app:top2}
	Relabel the largest two coordinates of $\mathbf{g}$ as $a=g_1\ge c=g_2\ge g_i$ for $i\ge3$.
	Pairwise monotonicity implies $a\ge0$.
	If $a=0$, all entries of $\mathbf{g}$ are zero, and any pairwise-monotone $\mathbf{v}\le\mathbf{g}$ is also zero.
	All three inequalities follow.
	If $c\le0<a$, then $q_1=1$ and $G=a$.
	For $i\ne1$, $v_i\le c\le0$ and $v_1\ge-v_i$, giving $v_i-v_1\le0$.
	Also $v_1\ge-v_2\ge-c\ge0$.
	Thus the replacement and addition bounds hold, and $a\le QG$.
	
	It remains to consider $a\ge c>0$, where
	\[
	q_1=\frac a{a+c},\qquad q_2=\frac c{a+c},\qquad G=\frac{a^2+c^2}{a+c}.
	\]
	For label $1$, $v_2\ge-v_1$ and $v_1\le a$ give
	\[
	v_1-\mathbf q^{\mathsf T}\mathbf v=q_2(v_1-v_2)\le2q_2a\le G.
	\]
	For label $2$, use $v_1\ge-v_2$ and $v_2\le c$ to get the same bound $2q_1c\le G$.
	For $i\ge3$, put $u=v_i\le c$.
	If $u<0$, then $v_1,v_2\ge-u$, so $u-\mathbf q^{\mathsf T}\mathbf v\le2u<0$.
	If $u\ge0$, write
	\[
	\mathbf q^{\mathsf T}\mathbf v=(q_1-q_2)v_1+q_2(v_1+v_2)\ge-(q_1-q_2)u,
	\]
	using $v_1\ge-u$ and $v_1+v_2\ge0$.
	Therefore $u-\mathbf q^{\mathsf T}\mathbf v\le2q_1u\le2q_1c\le G$.
	This proves the replacement bound for every label.
	The same decomposition and $v_1\ge-v_2\ge-c$ imply
	\[
	-\mathbf q^{\mathsf T}\mathbf v\le\frac{c(a-c)}{a+c}.
	\]
	Since $A(A+1)=1/4$ and $Q=A+1$,
	\[
	A(a^2+c^2)-c(a-c)=(\sqrt A\,a-\sqrt{A+1}\,c)^2\ge0,
	\]
	which proves the addition bound.
	Finally,
	\[
	Q(a^2+c^2)-a(a+c)=(\sqrt A\,a-\sqrt Q\,c)^2\ge0,
	\]
	proving $a\le QG$ and completing Lemma~\ref{lem:top2}.
	
	\section{Accuracy and parameter orders}\label{app:parameters}
	The parameter choices come from the weight-error and prefix-payment identities in~\eqref{eq:parameters}.
	Here we verify that they give the required approximation and query exponent.
	Since $0<\varepsilon<1/2$,
	\[
	0<t<\frac1{5+Q/2}<\frac12.
	\]
	Moreover,
	\[
	\frac1{2+Qt}\ge\frac12-\frac{Qt}{4},\qquad
	\tau\log d\le t,
	\]
	so
	\[
	\Lambda_0\ge\frac12-(2+Q/4)t>0.
	\]
	The strict positivity follows by substituting $t<1/(5+Q/2)$.
	Also $\Lambda_0\le1/2$, and $e^{-t}\ge1-t$ yields
	\[
	\frac{\Lambda_0}{K}=e^{-t}\Lambda_0
	\ge\Lambda_0-t/2
	\ge\frac12-(5/2+Q/4)t=\frac12-\varepsilon.
	\]
	The positive initial coefficient implies $t<1/(2+Qt)$, equivalently $Qt^2<1-2t$; hence $0<\rho<1$.
	
	To prove~\eqref{eq:condition}, differentiate $\log\phi$ and observe that
	\[
	(1-b\lambda)(1-Q\lambda)\ge(1-b/2)(1-Q/2)=Qb/4
	\]
	for $0<\lambda\le1/2$; $Q-b=2(Q-1)$ gives the displayed $c_Q$.
	Because $2/\tau+2/t+c_Q+2\ge4/t$, $h\le t^2/8<t/2$.
	Conversely $t$ is bounded above by an absolute constant and
	\[
	h=\Theta\!\left(\frac{t^2}{1+\log d}\right),\qquad
	\kappa=1-e^{-h}=\Theta(h).
	\]
	Finally $K-\Lambda_0\ge1/2$ and $K-\Lambda_0\le e^t$, so
	\[
	H=\Theta\!\left(\frac{d(1+\log d)}{t^2}\right)
	=O\!\left(\frac{d\log(2d)}{\varepsilon^2}\right).
	\]
	All quantities in~\eqref{eq:counts} consequently depend only on $d,\varepsilon$.
	This parameter analysis requires no lower bound on a positive input cost.
	
	
	\section*{AI assistance and author responsibility}
	The main ideas and methods of this paper build on the authors' earlier manuscript~\cite{chen2026}.
	The algorithm and analysis were developed through repeated interactions between the authors and GPT-6.1 Sol Ultra.
	
	First, the authors asked GPT to combine the single-knapsack Bellman analysis in~\cite{chen2026} with multilinear-extension techniques to obtain a $(1/2-\varepsilon)$-approximation under multiple knapsack constraints.
	After several rounds of interaction, GPT produced an algorithm and analysis.
	The authors then asked for a combinatorial algorithm with the same approximation guarantee.
	They pursued this task in parallel with GPT-6 Pro in chat mode and GPT-6.1 Sol Ultra in work mode.
	The two sessions had no shared memory, although the authors manually exchanged their intermediate results.
	GPT-6.1 Sol Ultra ultimately produced the combinatorial algorithm and analysis presented here.
	
	GPT generated the initial draft of the manuscript.
	The authors reviewed and verified the algorithm and its correctness proofs and made preliminary editorial revisions.
	They take responsibility for the content of the paper.
	The manuscript remains a draft: its readability still requires improvement, and its exposition will be further revised in subsequent versions.
	
	
	\begingroup
	\raggedright

	\endgroup
\end{document}